\def\anon{0}
\def\authornotes{0}
    \newcommand{\tselil}[1]{{\color{blue} [T: #1]}}
    \newcommand{\eric}[1]{{\color{red} [E: #1]}}
	\newcommand{\todo}[1]{{\color{Red}[TODO: #1]}}	
    \newcommand{\tselil}[1]{}
    \newcommand{\eric}[1]{}
	\newcommand{\todo}[1]{}
\title{Polynomial-time sampling despite disorder chaos}
\author{Eric Ma\thanks{Stanford University. \texttt{eryma@stanford.edu}.} \and Tselil Schramm\thanks{Stanford University.  \texttt{tselil@stanford.edu}. Supported by NSF CAREER award \# 2143246.}}
\author{Author name(s) withheld for double-blind review}
\date{}
\begin{document}
\maketitle

\begin{abstract}
A distribution over instances of a sampling problem is said to exhibit \emph{transport disorder chaos} if perturbing the instance by a small amount of random noise dramatically changes the stationary distribution (in Wasserstein distance).
Seeking to provide evidence that some sampling tasks are hard on average, a recent line of work has demonstrated that disorder chaos is sufficient to rule out ``stable'' sampling algorithms, such as gradient methods and some diffusion processes.

We demonstrate that disorder chaos does not preclude polynomial-time sampling by canonical algorithms in canonical models.
We show that with high probability over a random graph $\bG \sim G(n,1/2)$: (1) the hardcore model (at fugacity $\lambda = 1$) on $\bG$ exhibits disorder chaos, and (2) Glauber dynamics run for $O(n)$ time can approximately sample from the hardcore model on $\bG$ (in Wasserstein distance).

\end{abstract}
\thispagestyle{empty}

\clearpage

{ \hypersetup{hidelinks} \tableofcontents }
\thispagestyle{empty}
\clearpage
\setcounter{page}{1}

\section{Introduction}


Suppose we are given an instance of an optimization problem, such as the maximum independent set problem on a graph $G$.
Instead of solving the optimization problem and returning a maximum independent set $S^*$ in $G$, we may instead wish to \emph{sample} from a distribution $\mu_G$ over solutions, where the probability of sampling an independent set $S$ is a function of the objective value (here, size) of $S$.
For example, the well-studied \emph{hardcore model} with \emph{fugacity} $\lambda > 0$ assigns to each set $S \subset V(G)$ a probability 
\[
\mu_G(S) \propto \lambda^{|S|} \cdot \Ind[S \text{ independent set in } G].
\]
If $\lambda = 1$, $\mu_G$ is the uniform distribution over independent sets in $G$; on the other hand, as $\lambda \to \infty$, $\mu_G$ becomes concentrated on large independent sets, eventually converging to the uniform measure over optimal solutions $S^*$.
This sampling problem and its cousins originate at the intersection of theoretical computer science and statistical physics, and sampling is by now mainstream in theoretical computer science.

The discussion above makes it clear that optimization reduces to sampling.
For example, in the hardcore model, sampling from $\mu_G$ at sufficiently large fugacity $\lambda$ gives a randomized algorithm for solving the maximum independent set problem on $G$.
For this reason, if $\mathrm{P} \neq \mathrm{NP}$, we expect that sampling in general requires super-polynomial time.
Even in the average case, when the graph $\bG$ is drawn from, say, the \erdos-\renyi distribution, the best polynomial-time algorithms known today can only find an independent set of size $\frac{1}{2}$ of optimal, and so even in average-case graphs, sampling from $\mu_{\bG}$ at large enough fugacity is (conjecturally) hard.

In fact, for some optimization problems it is believed that sampling is strictly harder than optimization, in that the sampling problem appears to be hard, even when the optimization problem is known to be solvable in polynomial time.
One well-studied example is the max-cut problem on a graph with independent standard Gaussian weights, also known as the \emph{Sherrington Kirkpatrick Ising model}.
There, a striking result of Montanari \cite{Mon21} gives a PTAS for the optimization problem, but the associated sampling problem is conjectured hard (see e.g. \cite{AMS22}).
The same phenomenon occurs in several average-case sampling/optimization problems, including polynomial optimization with Gaussian coefficients (``$p$-spin models'') \cite{Subag21,AMS23,HMP24}, and finding a Boolean vector in an intersection of random halfspaces (``perceptron problems'') \cite{BS20,ALS22,AG24}.
It is possible (though not widely believed, see discussion below) that this also occurs in the hardcore model in sparse random graphs $\bG \sim G(n,d/n)$, in that polynomial-time optimization algorithms produce independent sets of size $(1+o_d(1))\frac{\log d}{d} n$, but we know no efficient algorithm that samples from $\mu_{\bG}$ at the corresponding fugacity $\lambda = \Omega( 1/\log d)$.


We want to understand whether average-case sampling problems, such as sampling from the hardcore model on \erdos-\renyi graphs, are hard for polynomial-time algorithms.
Complexity cannot always follow from the optimization-to-sampling reduction, as sometimes the associated optimization problem is easy.
So far the leading approach is to prove lower bounds against specific algorithms.
For example, we might try to show super-polynomial mixing time of the canonical \emph{Glauber dynamics} Markov Chain, whose stationary distribution is $\mu_{\bG}$. 
Such lower bounds are a good place to start, but they are dissatisfying in that they only rule out a very specific (albeit canonical) algorithm.

Recently, El Alaoui, Montanari, and Sellke \cite{AMS22} have proposed \emph{transport disorder chaos} as a more general criterion for hardness of sampling, and have shown that it gives unconditional lower bounds against \emph{smooth} algorithms.
A sampling problem $\mu_{\bG}$ exhibits transport disorder chaos if small random perturbations $\bG'$ of the instance $\bG$ tend to result in a large Wasserstein (or ``transport'') distance between $\mu_{\bG}$ and $\mu_{\bG'}$
(this is qualitatively similar to, though technically weaker than, the overlap-gap property, see the survey \cite{Gamarnik21}).
Smooth algorithms are algorithms which are robust to small perturbations: $\calA$ is a smooth sampling algorithm if the Wasserstein distance between the output distributions $\calA(\bG)$ and $\calA(\bG')$ is small.
Thus, given that $\mu_{\bG}$ exhibits transport disorder chaos, the triangle inequality implies that smooth algorithms cannot sample from $\mu_{\bG}$ in Wasserstein distance.

El Alaoui, Montanari, and Sellke introduced this idea in the context of the Sherrington Kirkpatrick Ising model as evidence that their algorithmic method is close to tight (at least among smooth algorithms): they present a smooth sampling algorithm based on stochastic localization that works up to \emph{inverse temperature} (the analogue of fugacity) $\beta = \frac{1}{2}$ (later improved to $\beta <1$ in \cite{Cel24}), and then show that disorder chaos onsets at larger inverse temperature $\beta > 1$.
This style of lower bound has since been utilized for other problems, including $p$-spin models \cite{AMS23,AMS25}, perceptron problems \cite{AG24}, and Ising models on sparse graphs \cite{HPP25}.

Which natural sampling algorithms are smooth?
In \cite{AMS22,AMS23}, the authors prove that in $p$-spin models, an algorithm comprising of $O(1)$-steps of gradient descent or discretized stochastic localization is stable.
In the context of $p$-spin models, these algorithms are state-of-the-art for sampling (and the disorder chaos lower bounds of \cite{AMS22,AMS23} conclusively show that their analysis of these algorithms is sharp).
But for other average-case sampling problems, it is not clear whether the canonical algorithms are smooth.
Further, even in $p$-spin models, gradient descent is not smooth if it is run for $\omega(1)$ steps\footnote{The top eigenvector of a symmetric Gaussian matrix is computable by gradient descent, but is not a smooth function of the matrix.} (though for most $p$-spin models, analyzing $\omega(1)$ steps seems beyond the reach of current techniques). 

In a recent work, Li and Schramm \cite{LS24} begin to investigate whether disorder chaos could be a barrier to a larger class of algorithms.
They demonstrate that the (not especially canonical) short $(s,t)$-path problem in $G(n,2\log n / n)$ on the one hand exhibits transport disorder chaos, but on the other hand admits polynomial-time sampling algorithms.
However, their sampling algorithms are based on brute-force enumeration: the corresponding measure $\mu_{\bG}$ is concentrated on $\poly(n)$ efficiently-enumerable solutions.
Most sampling problems of interest have $\mu$ anti-concentrated on any subset of $\poly(n)$ solutions.
Thus, sampling by brute-force enumeration rarely yields efficient algorithms. 
Adding to this that $(s,t)$-path has not been previously studied as a sampling problem, and Li and Schramm's counterexample might seem like a curious anomaly, without clear implications for the complexity of typical sampling problems.

The main question we wish to address in this work is whether transport disorder chaos implies hardness of sampling by canonical algorithms.
We ask whether one can furnish a more canonical sampling problem which is a counterexample in that it (1) can be solved by polynomial-time algorithms, while (2) exhibiting transport disorder chaos.
Indeed, we are able to show that at fugacity $\lambda = 1$, polynomial-time Glauber dynamics solves the problem of sampling from the hardcore model over $\bG \sim G(n,1/2)$ (in Wasserstein distance), even while $\mu_{\bG}$ exhibits disorder chaos.
Our sampling result may be of independent interest, as no efficient algorithm is known to sample from the hardcore model on $G(n,1/2)$ at this fugacity.


\subsection{Results}


To state our results, we require a definition:
\begin{definition}[Normalized Wasserstein/transport distance]
\label{def:w2}
If $\mu,\nu$ are distributions over a space equipped with $\ell_2$ metric $\|\cdot\|$, the \emph{normalized $2$-Wasserstein or transport distance between $\mu$ and $\nu$} is the quantity
\[
\w_2(\mu,\nu) \defeq \inf_{\pi \in C(\mu,\nu)} \sqrt{ \E_{(\bX,\bY) \sim \pi} \left\|\ \frac{\bX}{\sqrt{\E_{\bX \sim \mu}\|\bX\|^2 }}-\frac{\bY}{\sqrt{\E_{\bY \sim \nu} \|\bY\|^2}}\right\|^2 },
\]
where $C(\mu,\nu)$ is the space of all couplings of $\mu$ and $\nu$.
\end{definition}

We will represent $\mu_G$ as a distribution over the $\ell_2$-space $\set{0,1}^n$, associating each independent set $S$ with its indicator vector.
Our first result is that the hardcore model on $G(n,1/2)$ exhibits disorder chaos, meaning that small perturbations of $\bG$ result in large perturbations of $\mu_{\bG}$, as measured in normalized Wasserstein distance:

\begin{theorem}[The hardcore model exhibits disorder chaos]
\label{thm:chaos-intro}
Consider correlated graphs $\bG,\bG' \sim G(n,1/2)$, where $\bG'$ is given by re-sampling each edge in $\bG$ independently with probability $s \in (0,1)$.
Let $\mu_{\bG},\mu_{\bG'}$ denote the hardcore model at fugacity $\lambda = 1$ on $\bG,\bG'$ respectively.
Then on average, even for arbitrarily small $s$ the distributions $\mu_{\bG}$ and $\mu_{\bG'}$ are as separated as possible in normalized Wasserstein distance,
\[
\lim_{s \to 0} \liminf_{n \to \infty} \E_{\bG,\bG'} \w_2(\mu_{\bG},\mu_{\bG'})^2 = 2.
\]
\end{theorem}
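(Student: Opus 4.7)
The plan is to expand the squared normalized transport distance and bound the maximum coupling correlation. Since $X\in\{0,1\}^V$ is the indicator of an independent set, $\|X\|^2=|X|$ and $\langle X,Y\rangle=|X\cap Y|$, so
\[
\w_2^2(\mu_{\bG},\mu_{\bG'})\;=\;2\;-\;2\sup_{\pi}\,\frac{\E_\pi|X\cap Y|}{\sqrt{\E_{X\sim\mu_{\bG}}|X|\cdot \E_{Y\sim\mu_{\bG'}}|Y|}},
\]
where the supremum is over couplings of $(\mu_{\bG},\mu_{\bG'})$. It therefore suffices to show that this supremum, averaged over $(\bG,\bG')$, tends to $0$ as $n\to\infty$ for any fixed $s\in(0,1)$.

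The one-sided inequality I would use is: for any coupling $\pi$,
\[
\E_\pi|X\cap Y|\;\leq\;\E_{X\sim\mu_{\bG}}\bigl[\alpha(\bG'[X])\bigr],
\]
where $\alpha(\bG'[X])$ is the independence number of the subgraph of $\bG'$ induced on the vertices of $X$. Indeed, conditioning on $X=S$, the variable $Y$ must be supported on independent sets of $\bG'$, so $S\cap Y\subseteq S$ is itself independent in $\bG'$ and has size at most $\alpha(\bG'[S])$. This is where the event $\bG\ne\bG'$ gets used: since $X$ is a non-edge set of $\bG$, the resampling coupling makes each pair inside $X$ an independent Bernoulli$(1/2)$ with probability $s$ (and absent otherwise), so that each appears in $\bG'$ with probability $s/2$. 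Hence conditional on $(\bG,X)$, the random graph $\bG'[X]$ is genuinely \erdos-\renyi, namely $G(|X|,s/2)$.

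Two standard estimates then close the argument. First, for $H\sim G(k,s/2)$, $\alpha(H)=O(s^{-1}\log k)$ in expectation and with high probability. Second, for the hardcore measure at $\lambda=1$ on $\bG\sim G(n,1/2)$, a second-moment analysis of the independent-set counts $\E N_k(\bG)=\binom{n}{k}2^{-\binom{k}{2}}$, peaked at $k^{\ast}=\log_2 n -\log_2\log_2 n+O(1)$, yields that $|X|$ concentrates near $\log_2 n$ under $\mu_{\bG}$ for typical $\bG$; in particular $\E_{X\sim\mu_{\bG}}|X|=\Theta(\log n)$ w.h.p.\ and $|X|=O(\log n)$ w.h.p. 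On the resulting high-probability event in $(\bG,\bG')$, the ratio in the first display is at most $O_s(\log\log n)/\Omega(\log n)=o_n(1)$; on its complement one absorbs using the trivial bound $\w_2^2\leq 2$ (which follows from the product coupling). Sending $n\to\infty$ first and $s\to 0$ afterwards yields $\lim_{s\to 0}\liminf_{n\to\infty}\E\w_2^2=2$.

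The main conceptual hurdle is that any bound using only coordinate marginals --- for instance $\sup_\pi\E_\pi|X\cap Y|\leq\sum_i\min(\Pr_{\mu_{\bG}}[i\in X],\Pr_{\mu_{\bG'}}[i\in Y])\leq\sqrt{\E|X|\E|Y|}$ --- is useless here, giving only $\w_2^2\geq 0$: the single-coordinate marginals $\Pr[i\in X]\approx(\log n)/n$ are unchanged whether the two graphs are correlated or independent. The bound above instead leverages the global combinatorial constraint that $Y$ must be an actual independent set of $\bG'$ (not merely a random subset with the right marginals), which is what forces the maximal chaos. Technically, the most delicate step is joining the size-concentration of $\bX$ under $\mu_{\bG}$ with the independence-number estimate for $\bG'[\bX]$ in the joint law, but both pieces follow from the exact \erdos-\renyi structure of the induced subgraph together with standard concentration.
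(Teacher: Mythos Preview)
Your proof is correct and follows essentially the same strategy as the paper's: expand $\w_2^2 = 2 - 2\sup_\pi \E_\pi|X\cap Y|/\sqrt{\E|X|\E|Y|}$, use that $|X\cap Y|$ is bounded by the independence number of $\bG'[X]$, observe that $\bG'[X]\sim G(|X|,s/2)$ conditionally on $(\bG,X)$ because $X$ is an independent set in $\bG$, and combine the $O_s(\log\log n)$ bound on $\alpha(G(\log n,s/2))$ with the $\Theta(\log n)$ concentration of $|X|$ under $\mu_{\bG}$. The paper packages the same idea slightly differently---it fixes a threshold $\log^2 k$, defines a $k$-independent set to ``survive'' if $\alpha(\bG'[S])\ge\log^2 k$, and bounds the \emph{fraction} of surviving sets via Markov---but this is just your inequality $|X\cap Y|\le\alpha(\bG'[X])$ recast through an indicator; the underlying computation (first moment on independent sets in $G(k,s/2)$) is identical.
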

The intuition for \Cref{thm:chaos-intro} is straightforward.
Since we are representing subsets of $[n]$ with their indicators in $\{0,1\}^n$, the squared distance between independent sets $\bS \in \bG$ and $\bS' \in \bG'$ is $\|\bS-\bS'\|^2 = |\bS| + |\bS'| - 2 |\bS \cap \bS'|$.
The quantity $|\bS \cap \bS'|$ is at most the size of the largest subset of vertices in $\bS$ which form an independent set in $\bG'$.
Say $|\bS| = m$; since each edge in $\bG'$ is resampled with probability $s$, the graph induced on $\bS$ in $\bG'$ is distributed according to $G(m, s/2)$.
The key fact is that in $G(m,s/2)$, the size of the maximum independent set scales as $2 \log m / \log (1-s/2)^{-1}$ with high probability.
Hence, so long as $s=\Omega_m(1)$,  $ |\bS \cap \bS'| = O(\log m) \ll m= |\bS|$, and $\|\bS - \bS'\|^2 = |\bS| + |\bS'| - o(|\bS|)$.
At fugacity $\lambda = 1$, the sizes of $\bS \sim \mu_{\bG}$ and $\bS' \sim \mu_{\bG'}$ concentrate well and grow as $n \to \infty$, which suffices to argue that $\w_2^2(\mu_{\bG},\mu_{\bG'}) = 2(1-o_n(1))$ with high probability.

We remark that the same argument can be used to prove that $\w_2(\mu_{\bG},\mu_{\bG'}) = \Omega_s(1)$ when $\bG \sim G(n,p)$ for any $\frac{1}{2} > p = \Omega(1/n)$ and $\lambda = \Theta_n(1)$ (though one must make adjustments to the asymptotics of $|\bS \cap \bS'|$ when $p$ is small).
For the sake of simplicity, we have fixed $p = \frac{1}{2}$ and fugacity $\lambda = 1$.

\medskip

We also show that despite the disorder chaos, the canonical Markov Chain sampling algorithm for the hardcore model induces a distribution which is close to $\mu_{\bG}$ in $\w_2$ distance.

\begin{theorem}[Polynomial-time sampling in Wasserstein distance]
\label{thm:glauber-sample}
Suppose $\bG \sim G(n,1/2)$ and let $\mu_{\bG}$ be the hardcore model on $\bG$ at fugacity $\lambda = 1$.
For an integer $k \ge 0$, let $\mu^{\mathrm{Glauber}}_k(\bG)$ be the distribution of the Glauber dynamics Markov Chain on ${\bG}$ after initializing from the empty set, and running until the stopping time $\btau = \min\{100 (\log n)2^k), \btau_k\}$,  
where $\btau_k$ is the first time the Markov Chain hits an independent set of size $k$. 
Then for $k = \log n - 30\log\log n$, with probability $1-o_n(1)$,
\[
\w_2(\mu^{\bG},\mu^{\mathrm{Glauber}}_{k} (\bG)) = o_n(1).
\]
\end{theorem}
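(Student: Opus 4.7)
The plan is to bound $\w_2(\mu^{\bG}, \mu^{\mathrm{Glauber}}_k(\bG))$ via two intermediate distributions: the distribution $\pi_k$ of the \emph{pure growth process} stopped at size $k$ (at each step, pick a uniform vertex and add it iff addable), and the uniform distribution $\nu_k$ on size-$k$ independent sets of $\bG$. The three resulting distances will be bounded separately, with the main technical obstacle being the uniform-in-$T$ concentration in the final step.

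\textbf{Step 1: Glauber reduces to pure growth.} Since $2^k = n/(\log n)^{30}$, the time horizon $T := 100 (\log n) \cdot 2^k$ is $O(n/(\log n)^{29})$. The expected number of deletion events in $T$ steps is at most $\tfrac{k}{2n} \cdot T = O((\log n)^{-28}) = o(1)$, because each deletion requires picking one of the $\leq k$ occupied vertices. Hence with high probability no deletion occurs before $\btau$, and the Glauber chain coincides with pure growth up to that time. The pure-growth hitting time of size $k$ has mean $\sum_{j=1}^k 2n/\E|F_j| = O(2^k)$, where $F_j$ is the set of addable vertices at size $j-1$ (with $\E|F_j| \approx n \cdot 2^{-(j-1)}$), so $\btau_k \leq T$ w.h.p., and therefore $d_{TV}(\mu^{\mathrm{Glauber}}_k(\bG), \pi_k) = o(1)$.

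\textbf{Step 2: $\pi_k \approx \nu_k$.} The pure-growth probability factorizes as
\[
\pi_k(T) = \sum_{\sigma \in \mathrm{Sym}(T)} \prod_{j=1}^k \frac{1}{|F_j(\sigma)|}.
\]
Conditional on $T \in \mathrm{IS}_k(\bG)$ and on an ordering $\sigma = (v_1,\ldots,v_k)$, the quantity $|F_j(\sigma)|$ equals $(k-j+1)$ plus a $\mathrm{Binomial}(n-k,\,2^{-(j-1)})$ counting vertices in $V \setminus T$ non-adjacent to $\{v_1,\ldots,v_{j-1}\}$; these indicators are independent since the edges of $\bG$ are. With $\E|F_j| \geq 2(\log n)^{30}$ uniformly in $j \leq k$, Chernoff plus a union bound gives $|F_j(\sigma)| = (1 \pm o(1))\, n \cdot 2^{-(j-1)}$ for all $j$ and all but a $o(1)$-fraction of orderings, whence $\pi_k(T) = (1 \pm o(1))\, k! \cdot 2^{\binom{k}{2}}/n^k$ for most $T$. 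A standard second-moment argument yields $\#\mathrm{IS}_k(\bG) = (1 \pm o(1)) \binom{n}{k} 2^{-\binom{k}{2}}$ w.h.p., so $\nu_k(T) = 1/\#\mathrm{IS}_k(\bG)$ matches $\pi_k(T)$ on most $T$, giving $d_{TV}(\pi_k, \nu_k) = o(1)$ and hence $\w_2(\pi_k, \nu_k) = o(1)$ (since all normalized supported vectors have norm $\leq 1$).

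\textbf{Step 3: Couple $\nu_k$ with $\mu^{\bG}$.} Draw $\bX \sim \mu^{\bG}$ and let $\bT$ be a uniformly random size-$k$ subset of $\bX$; denote the marginal of $\bT$ by $\rho_k$. Standard concentration for the hardcore model at $\lambda=1$ gives $\E|\bX| = (1 \pm o(1)) m^*$ with $m^* := \log_2 n - \log_2 \log_2 n$, and $|\bX| \geq k$ w.h.p., so the coupling is well-defined. Since $\bT \subseteq \bX$ by construction,
\[
\E \bigl\| \bX/\sqrt{\E\|\bX\|^2} - \bT/\sqrt{\E\|\bT\|^2} \bigr\|^2 = 2 - 2\sqrt{k/m^*} = O\bigl((\log \log n)/\log n\bigr) = o(1),
\]
so $\w_2(\mu^{\bG}, \rho_k) = o(1)$. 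To identify $\rho_k$ with $\nu_k$: for a typical $T \in \mathrm{IS}_k(\bG)$, the residual graph $\bG[V \setminus N(T)]$ is distributed as $G(n', 1/2)$ with $n' = (1 \pm o(1))\, n/2^k = (1 \pm o(1))(\log n)^{30}$, and one can write $\rho_k(T) = \tfrac{1}{\#\mathrm{IS}(\bG)}\sum_\ell \#\mathrm{IS}_\ell(\bG[V \setminus N(T)])/\binom{k+\ell}{k}$. A second-moment analysis of independent-set counts in the residual graph that holds uniformly over a $(1-o(1))$-fraction of $T$ --- the main technical obstacle, as the analysis must be uniform in $T$ --- yields $\rho_k(T) = (1 \pm o(1)) \nu_k(T)$, hence $\w_2(\rho_k, \nu_k) = o(1)$. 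Combining by the triangle inequality gives $\w_2(\mu^{\bG}, \mu^{\mathrm{Glauber}}_k(\bG)) = o(1)$.
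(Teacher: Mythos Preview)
Your three-step decomposition matches the paper's exactly: Steps~1 and~2 (coupling Glauber to the pure-growth/greedy process, then showing greedy is close to uniform on $k$-independent sets) are proved in the paper via the same Chernoff-plus-union-bound over all prefixes that you sketch.

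Step~3 differs in two ways. First, your coupling goes \emph{downward} (sample $\bX \sim \mu^{\bG}$ and take a random $k$-subset), whereas the paper goes \emph{upward} (sample a $k^-$-IS and extend it to a uniform $k'$-IS containing it, for each $k'$ in the typical window $k^* \pm \log\log n$); these are dual and require the same underlying concentration of the extension counts $\updeg_\ell(T)$. Second, and more substantively, you propose a second-moment bound yielding concentration for a $(1-o(1))$-fraction of $T$, whereas the paper proves a $d$th-moment bound with $d \asymp \log^5 n/\log\log n$ (Lemma~\ref{updeg_concentration}, supported by combinatorial Lemmas~\ref{vertex_overlap}--\ref{lem:config_bound}) strong enough to union-bound over all $\binom{n}{k^-} \approx n^{\log n}$ sets and obtain concentration for \emph{every} $T$. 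Your route is genuinely shorter if it works, and it does: a per-$T$ Chebyshev bound on the residual IS counts (the residual graph being essentially $G((\log n)^{30},1/2)$ with $\ell \approx 29\log\log n$, well within the range of Lemma~\ref{small_clique_concentration}) gives failure probability $(\log n)^{-\Theta(1)}$, and then a first-moment/Markov argument over $T$ transfers this to ``most $T$ in a fixed $\bG$,'' which suffices for $d_{TV}(\rho_k,\nu_k)=o(1)$ once one checks that $\rho_k$ cannot pile mass on the bad $T$'s (this follows since $\rho_k(\text{good }T) \ge (1-o(1))\nu_k(\text{good }T) = 1-o(1)$). The paper's high-moment route buys a clean uniform-in-$T$ statement at the cost of the moment combinatorics; your route avoids that machinery but you should make the Markov-over-$T$ step explicit, since it is precisely the ``uniform in $T$'' obstacle you flag.
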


\begin{remark}
For simplicity our theorem is stated and proved only at fugacity $\lambda=1$, but similar arguments should work for any $\lambda \leq (\log n)^{O(1)}$. 
\end{remark}

The Glauber Dynamics is a Markov Chain, whose state at time $t$ is an independent set $\bS_t$ of $\bG$.
At step $t+1$, the dynamics samples a vertex $v \sim [n]$ uniformly at random.
If $v \in \bS_t$, the dynamics decides whether to drop $v$ from $\bS_t$ based on the outcome of a biased coinflip; otherwise if $v \not\in\bS_t$ but $\{v\}\cup\bS_t$ forms an independent set, $v$ is added to $\bS_t$ based on the outcome of a biased coinflip.

To prove \Cref{thm:glauber-sample}, we first give a simple argument that couples the trajectory of Glauber Dynamics initialized at $\bS_0 = \emptyset$ up to time $\btau_k$ with the (randomized) greedy algorithm for finding an independent set in $\bG$.
Concentration phenomena for nested independent sets in $\bG$ allow us to show that if the greedy algorithm is run for $k = \log n - \Omega(\log\log n)$ steps, the output is close in total variation distance to the uniform distribution over $k$-independent-sets in $\bG$.
We construct a good $\w_2$ coupling by (1) sampling a $k$-clique $\bS$ using Glauber dynamics, (2) sampling an integer $k^+$ according to the marginal over $|\bS'|$ for $\bS' \sim \mu_{\bG}$, and (3) if possible, outputting a uniformly random $k^+$-sized clique $\bS'$ which contains $\bS$ (otherwise output an arbitrary $k^+$-independent set in $\bG$).
We argue that $k^+$ is concentrated in a window of size $2\log\log n$ around $\log n$, so that with high probability $k^+ > k$ and step (3) is valid.
Then, we establish that for each $k^+$, the distribution induced by first sampling a $k$-clique, and then sampling a random $k^+$ clique which contains it, is close to uniform over $k^+$ cliques in $\bG$.
Our proof is not too complicated: the conclusion follows in a straightforward way from sufficiently strong concentration of the number of independent sets of size $k^+$ containing $S$ in $\bG$; we establish concentration by bounding $\Omega(\log\log n)$ moments via combinatorial arguments.
Since $\mu_{\bG}(S)$ treats all independent sets of equal size symmetrically, and since $|\bS\cap \bS'|/|\bS'| = 1-o_n(1)$, the coupling witnesses a $\w_2$ distance of $o_n(1)$.


\subsection{Discussion and open problems}

Together, \Cref{thm:chaos-intro,thm:glauber-sample} demonstrate that even for very canonical sampling problems, disorder chaos is not incompatible with the success of very canonical polynomial-time (Wasserstein) sampling algorithms.
The upshot is that even when run on average-case sampling problems, many canonical sampling algorithms are simply not smooth in the sense of El Alaoui, Montanari, and Sellke.
Our results demonstrate that the set of non-smooth algorithms includes Glauber dynamics in the hardcore model at some fugacities; presumably this is true of Glauber dynamics in other contexts as well.\footnote{We mention that Glauber dynamics is known \cite{BJ18} to mix slowly for many $p$-spin models at low temperatures, though as far as we know there is no lower bound for the Sherrington Kirkpatrick model or other models with ``full replica symmetry breaking.''}

A more troubling point is that the class of non-smooth algorithms includes iterative algorithms, such as gradient methods or stochastic localization, when run for $\omega(1)$ steps. 
The state-of-the-art algorithms for sampling from $p$-spin models are such iterative methods run for $O(1)$ steps, and disorder chaos has been used as some evidence of their optimality.
Though it is not really our expectation that $\poly(n)$ steps of a stochastic localization procedure can sample from $p$-spin models in the hard regime, our results still sow some doubt on the finality of the disorder-chaos based lower bounds.


A separate question concerns sampling from the hardcore model.
The best-studied average-case sampling question for the hardcore model concerns \emph{sparse} random graphs $\bG \sim G(n,d/n)$ when $d=O_n(1)$, and specifically the question of sampling in \emph{total variation}: the goal is to have a polynomial-time algorithm $\calA$ which with high probability satisfies $\dtv{\mu_{\bG}}{\calA(\bG)} = o(1)$.
By analogy with the $d$-regular tree, it has been conjectured that Glauber Dynamics can sample in polynomial time in this setting for fugacity up to the reconstruction threshold at $\lambda = \Omega_d(1)$ (see \cite{RSVVY14,BST16}).
The largest fugacity at which Glauber Dynamics is currently known to sample from $\mu_{\bG}$ in total variation is $\lambda = O(d^{-1})$ \cite{BGGS24,EF23}; in the qualitatively similar random $d$-regular graph model, this is improved to $\lambda = O(d^{-1/2})$ \cite{CCCYZ25} (and the recent result \cite{LMRW24} for the Ising models on $G(n,d/n)$ might generalize to allow $\lambda = O(d^{-1/2})$ in the hardcore model).
It is difficult to make an apples-to-apples comparison between fugacity in $G(n,1/2)$ and $G(n,d/n)$, so instead, we compare the typical size of an independent set $\bS \sim \mu_{\bG}$: when $\bG \sim G(n,1/2)$, our \Cref{thm:glauber-sample} shows that the Glauber dynamics can sample in Wasserstein when $\lambda$ is set such that the sets $\bS \sim \mu_{\bG}$ have size within a factor $1/2$ of optimal.
On the other hand, when $\bG \sim G(n,d/n)$ or from the random $d$-regular graph model, the largest fugacities at which Glauber is known to sample in total variation produce sets that are only within a factor up to $1/4$ of optimal.

Given our \Cref{thm:glauber-sample}, in $G(n,1/2)$ Glauber can sample in Wasserstein distance in a regime where it is not currently known to sample in total variation distance. 
Of course, it may be the case that Glauber can sample in total variation up to $\lambda = \Omega(1)$ in both dense and \emph{sparse} \erdos-\renyi graphs. 
Our proof strategy cannot be directly used to establish such a result because of a lack of concentration; one would have to use different arguments.
But alternatively, it is also possible that the conjecture derived from the $d$-regular tree is incorrect, and that sampling in Wasserstein distance is strictly easier than sampling in total variation.
We are intrigued to know the answer, and we wonder what evidence for this possibility could look like.

\subsection*{Organization}
In \Cref{sec:prelims} we discuss notation and give a few definitions and standard technical lemmas. 
\Cref{sec:hardcore} establishes concentration properties of the hardcore model on $G(n,1/2)$ (specifically, that the sizes of independent sets drawn from the hardcore model are very well concentrated around a specific value $\ks$).
In \Cref{sec:disorder}, we prove that the hardcore model on \erdos-\renyi graphs exhibits disorder chaos (\Cref{thm:chaos-intro}), and in \Cref{sec:glauber} we prove that Glauber Dynamics samples from the hardcore model in transport distance (\Cref{thm:glauber-sample}).
\section{Preliminaries}
\label{sec:prelims}

In this section we describe our notation conventions, give some definitions, and mention some useful standard technical lemmas.

We will follow the convention that random variables are written in bold font.
We will occasionally abbreviate independent set as i.s.
In this work, $\log\defeq\log_2$ unless otherwise specified.
We use standard big-Oh notation; at times we use the possibly-ambiguous $O(1)$; where there is room for confusion, we add a subscript to signify the variable with respect to which the quantity is bounded, say $f(m) = O_m(1)$ to signify that $\lim_{m \to \infty} f(m) < \infty$ or $f(\eps) = O_\eps(1)$ to signify that $\lim_{\eps \to 0} f(\eps) < \infty$.
For a measure $\mu$ on an $\ell_2$ space, we will let $m_2(\mu) = \sqrt{\E_{\bX\sim\mu} \norm{\bX}_2^2}$.
For probability measures $\mu,\nu$, let $C(\mu,\nu)$ denote the set of all couplings of $\mu,\nu$.

We will find the following standard results useful.
\begin{lemma}
\label{W2_convexity}
Let $\mu = \sum_{i=1}^k p_i \mu_i, \nu = \sum_{i=1}^k p_i \nu_i$ be mixture distributions. Consider any set of couplings $\pi_i,i\in[k]$ where $\pi_i \in C(\mu_i,\nu_i)$ is a coupling of $\mu_i$ and $\nu_i$. Then
\begin{align*}
\w_2(\mu,\nu)^2 \leq \sum_{i=1}^k p_i \E_{(\bX,\bY) \sim \pi_i}\left\|\ \frac{\bX}{\sqrt{\E_{\bX \sim \mu}\|\bX\|^2 }}-\frac{\bY}{\sqrt{\E_{\bY \sim \nu} \|\bY\|^2}}\right\|^2
\end{align*}
\end{lemma}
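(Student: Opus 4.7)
The plan is to construct an explicit coupling of $\mu$ and $\nu$ from the given $\pi_i$'s and then plug it into the definition of $\w_2$. Concretely, I would define the mixture coupling
\[
\pi \defeq \sum_{i=1}^k p_i \, \pi_i.
\]
First I would check that $\pi \in C(\mu,\nu)$: its first marginal is $\sum_i p_i \mu_i = \mu$ and its second marginal is $\sum_i p_i \nu_i = \nu$, since marginalization is linear and each $\pi_i$ has marginals $\mu_i,\nu_i$.

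Next I would apply the definition of $\w_2$ (\Cref{def:w2}), which gives $\w_2(\mu,\nu)^2$ as an infimum over all couplings of the expectation of $\|\bX/m_2(\mu) - \bY/m_2(\nu)\|^2$. Since the normalizers $m_2(\mu)$ and $m_2(\nu)$ depend only on the marginals $\mu$ and $\nu$ and not on the coupling chosen, they are the same constants whether we integrate against $\pi$ or any $\pi_i$. Evaluating the objective on the particular coupling $\pi$, linearity of expectation over the mixture $\sum_i p_i \pi_i$ yields exactly the right-hand side of the stated inequality. Since $\w_2^2$ is defined as an infimum, this gives the desired upper bound.

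There is essentially no obstacle here: the only thing to watch is that the normalizing factors in the integrand are fixed by the marginals $\mu,\nu$ and are \emph{not} replaced by $m_2(\mu_i), m_2(\nu_i)$ when we restrict to the component $\pi_i$. The lemma's statement already writes the normalizers as $m_2(\mu)$ and $m_2(\nu)$, so this is consistent and the proof is a one-line mixture-coupling argument.
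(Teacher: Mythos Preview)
Your proposal is correct and matches the paper's own proof essentially verbatim: the paper also constructs the mixture coupling (described procedurally as ``pick $\bi$ with probability $p_{\bi}$, then sample $(\bX,\bY)\sim\pi_{\bi}$'') and notes that the conclusion follows from the definition of $\w_2$ as an infimum over couplings.
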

\begin{proof}
Observe that the following is a coupling of $\mu,\nu$: first pick $\bi\in[k]$ according to the distribution $p_1,\dots,p_k$, and then sample $(\bX,\bY)\sim\pi_{\bi}$. The conclusion follows.
\end{proof}

We also mention that our definition of $\w_2$ satisfies the triangle inequality. This follows from the fact that the standard $W_2$ distance satisfies the triangle inequality, and we are simply computing the standard $W_2$ distance on scaled measures.

\begin{definition}
Let $P$ be a transition matrix on on finite state space $\Omega$. Let $\mu$ be some measure on $\Omega$. We define the time-reversal $P'$ of $P$ as a transition matrix where
\begin{align*}
    P'(i,j) = \frac{P(j,i)\mu(j)}{\mu P(i)}.
\end{align*}
\end{definition}

\begin{lemma}
$\mu P P'=\mu$, and $P(j,i)= 0 \implies P'(i,j)=0$.
\label{timereversal}
\end{lemma}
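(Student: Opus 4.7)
The plan is to verify both claims by direct computation from the definition of the time-reversal. For the main identity $\mu P P' = \mu$, I would expand
\[
(\mu P P')(j) = \sum_i (\mu P)(i) \cdot P'(i,j) = \sum_i (\mu P)(i) \cdot \frac{P(j,i)\,\mu(j)}{(\mu P)(i)},
\]
observe that the factors of $(\mu P)(i)$ cancel, pull $\mu(j)$ outside the sum, and conclude using the fact that $P$ is stochastic, so $\sum_i P(j,i) = 1$. This yields $(\mu P P')(j) = \mu(j)$ as desired. For the second claim, the numerator of $P'(i,j)$ contains $P(j,i)$ as an explicit factor, so whenever $P(j,i) = 0$ we automatically get $P'(i,j) = 0$.

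The only minor obstacle is the degenerate case $(\mu P)(i) = 0$, where the definition of $P'(i,\cdot)$ is formally $0/0$. I would resolve this by noting that for any such $i$ the prefactor $(\mu P)(i)$ in the sum defining $(\mu P P')(j)$ is zero, so the row $P'(i,\cdot)$ can be set arbitrarily (for instance as the point mass at $i$) without affecting either conclusion. Both verifications are routine, and I do not expect any serious difficulty.
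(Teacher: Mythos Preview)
Your proposal is correct and matches the paper's proof essentially line for line: both expand $(\mu P P')(j)$, cancel the $(\mu P)(i)$ factors, and use that $P$ is stochastic, and both read the second claim directly off the definition. Your treatment of the degenerate case $(\mu P)(i)=0$ is an extra bit of care the paper omits, but otherwise the arguments are identical.
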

\begin{proof}
It is clear from the form of $P'$ that $P(j,i) = 0\implies P'(i,j)=0$. Moreover,
\begin{align*}
\mu P P'(x) = \sum_{y\in\Omega} \mu P(y) P'(y,x) = \sum_{y\in\Omega} \mu P(y) \frac{P(x,y)\mu(x)}{\mu P(y)} = \mu(x) \sum_{y\in\Omega} P(x,y) = \mu(x). 
\qquad \qedhere
\end{align*}
\end{proof}

Recall the following useful facts about TV distance.
\begin{lemma}
\label{coupling}
Let $\mu,\nu$ be probability measures on a finite set $\Omega$. Then 
\begin{align*}
\dtv{\mu}{\nu} = \inf_{\pi\in C(\mu,\nu)} \Pr_{(\bX,\bY)\sim \pi}[X \neq Y].
\end{align*}
\end{lemma}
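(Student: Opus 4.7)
The plan is to prove the claim by establishing matching inequalities: any coupling gives an upper bound on $\Pr[\bX \neq \bY]$ via $\dtv{\mu}{\nu}$, and there exists a specific ``maximal'' coupling achieving equality.

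First, for the lower bound $\dtv{\mu}{\nu} \le \inf_\pi \Pr_{(\bX,\bY)\sim\pi}[\bX \neq \bY]$, I would fix an arbitrary coupling $\pi$ and an arbitrary event $A \subseteq \Omega$, and write
\[
\mu(A) - \nu(A) = \Pr_{(\bX,\bY)\sim\pi}[\bX \in A] - \Pr_{(\bX,\bY)\sim\pi}[\bY \in A] \le \Pr_{(\bX,\bY)\sim\pi}[\bX \in A, \bY \notin A] \le \Pr_{(\bX,\bY)\sim\pi}[\bX \neq \bY].
\]
Taking a supremum over $A$ then over couplings $\pi$ gives the desired bound (recalling that $\dtv{\mu}{\nu} = \sup_A (\mu(A) - \nu(A))$).

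Second, for the matching upper bound, I would construct an explicit coupling $\pi^*$ attaining equality. Write $p \defeq \sum_{x \in \Omega} \min(\mu(x), \nu(x))$; a quick computation (splitting $\Omega$ according to whether $\mu(x) \ge \nu(x)$) yields $p = 1 - \dtv{\mu}{\nu}$. The coupling $\pi^*$ is defined as follows: with probability $p$, draw $\bZ$ from the measure $\frac{1}{p}\min(\mu,\nu)$ and set $\bX = \bY = \bZ$; with the remaining probability $1 - p = \dtv{\mu}{\nu}$, independently sample $\bX \sim \frac{1}{1-p}(\mu - \min(\mu,\nu))^+$ and $\bY \sim \frac{1}{1-p}(\nu - \min(\mu,\nu))^+$. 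A direct check verifies that the marginals are indeed $\mu$ and $\nu$, and by construction the two parts are supported on disjoint sets of $\Omega$ (the ``diagonal'' part uses $\min(\mu,\nu)$ while the independent parts use the positive excess $\mu - \min(\mu,\nu)$ and $\nu - \min(\mu,\nu)$), so that on the independent part $\bX \neq \bY$ with probability $1$. Hence $\Pr_{\pi^*}[\bX \neq \bY] \le 1 - p = \dtv{\mu}{\nu}$.

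There is no real obstacle here: the only subtle point is verifying that the supports of the ``excesses'' $\mu - \min(\mu,\nu)$ and $\nu - \min(\mu,\nu)$ are disjoint, which is immediate since at each $x$ at least one of them vanishes. Combining the two inequalities yields the stated equality.
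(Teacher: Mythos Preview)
The paper does not actually give a proof of this lemma: it is stated as a recalled standard fact (``Recall the following useful facts about TV distance'') with no proof block. Your argument is the standard coupling characterization proof (e.g.\ as in Levin--Peres, \emph{Markov Chains and Mixing Times}, which the paper cites as \cite{LP17}): the lower bound via $\mu(A)-\nu(A) \le \Pr[\bX\in A,\bY\notin A]$, and the upper bound via the explicit maximal coupling built from $\min(\mu,\nu)$. It is correct; the only cosmetic point is that in the edge cases $p=0$ or $p=1$ the normalizations $1/p$ or $1/(1-p)$ are vacuous, but this is harmless.
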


\begin{lemma}
\label{whpTV}
Consider some probability space with probability measure $\Pr$.
Let $E$ be some event that occurs with probability $1-\delta$ under $\Pr$.
For all small enough $\delta$, $\dtv{\Pr[\cdot]}{\Pr[\cdot|E]} = O(\delta)$.
\end{lemma}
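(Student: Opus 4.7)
The plan is to give a short coupling-based argument, which is the cleanest route given that the paper has just recorded the coupling characterization of total variation in \Cref{coupling}. Write $\mu = \Pr[\cdot]$ and $\nu = \Pr[\cdot \mid E]$. I would build an explicit coupling $\pi \in C(\mu,\nu)$ as follows: sample $\bX \sim \mu$; if $\bX \in E$, set $\bY = \bX$; otherwise sample $\bY \sim \nu$ independently.

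The verification has two pieces. First, check that the $\bY$-marginal really is $\nu$. Conditional on $\bX \in E$ (which has probability $1-\delta$), $\bX$ is distributed as $\nu$ by the definition of conditional probability, so $\bY = \bX \sim \nu$. Conditional on $\bX \in E^c$, we have $\bY \sim \nu$ by construction. Mixing the two cases gives $\bY \sim \nu$ unconditionally. Second, bound the disagreement probability: since $\nu$ is supported on $E$, on the event $\bX \in E^c$ we automatically have $\bX \neq \bY$, and on the event $\bX \in E$ we have $\bX = \bY$. Hence $\Pr_\pi[\bX \neq \bY] = \Pr[\bX \in E^c] = \delta$, and \Cref{coupling} yields $\dtv{\mu}{\nu} \le \delta = O(\delta)$.

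There is no real obstacle here; the ``small enough $\delta$'' caveat in the statement is only needed if one instead runs the direct calculation $|\mu(A) - \nu(A)| \le \delta/(1-\delta)$ and wants to absorb the $1/(1-\delta)$ factor into the $O(\cdot)$. The coupling proof above avoids that altogether and works for every $\delta \in [0,1]$.
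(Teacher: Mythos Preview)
Your proof is correct but takes a different route from the paper. The paper argues directly via the variational characterization of total variation: for any event $A$ it bounds $\Pr[A\mid E] \le \Pr[A]/\Pr[E] \le \Pr[A] + O(\delta)$ (this is where the ``small enough $\delta$'' is used) and $\Pr[A\mid E] \ge \Pr[A\cap E] \ge \Pr[A]-\delta$. Your coupling argument is cleaner: it yields the sharp bound $\dtv{\mu}{\nu}\le \delta$ (in fact equality, since $|\mu(E^c)-\nu(E^c)|=\delta$), and it works for every $\delta\in[0,1]$ without any asymptotic caveat. One small point: \Cref{coupling} as stated in the paper is for finite $\Omega$, but you only use the easy direction---that any coupling's disagreement probability upper-bounds the TV distance---which holds in full generality, so this is not a real issue.
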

\begin{proof}
For any event $A$,
\begin{align*}
\Pr[A|E] = \frac{\Pr[A,E]}{\Pr[E]} \leq \Pr[A](1+O(\delta)) \leq   \Pr[A] + O(\delta),
\end{align*}
and also,
\begin{align*}
\Pr[A|E] \geq \Pr[A,E] \geq \Pr[A] - \delta.
\end{align*}
The result follows from the variational characterization of TV distance.
\end{proof}

\begin{lemma}
\label{largest_clique}
With probability $1-o(1)$, every independent set in $\bG \sim G(n,1/2)$ is of size $O(\log n)$.
\end{lemma}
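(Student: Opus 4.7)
The plan is a standard first-moment union bound. For an integer $k$, let $\bX_k$ denote the number of independent sets of size exactly $k$ in $\bG \sim G(n,1/2)$. By linearity of expectation, since each fixed $k$-subset of vertices is independent iff none of its $\binom{k}{2}$ potential edges are present, and since edges are present independently with probability $1/2$,
\[
\E[\bX_k] = \binom{n}{k} \, 2^{-\binom{k}{2}}.
\]

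Next I would choose $k = k(n)$ large enough that $\E[\bX_k] = o(1)$. Using $\binom{n}{k} \le n^k = 2^{k\log n}$ and $\binom{k}{2} \ge k(k-1)/2$, I get
\[
\E[\bX_k] \;\le\; 2^{k\log n - k(k-1)/2}.
\]
Taking $k = \lceil 3\log n \rceil$ makes the exponent $k\log n - k(k-1)/2 \le -\tfrac{1}{2}k \log n + O(\log n)$, which tends to $-\infty$, so $\E[\bX_k] = o(1)$. Markov's inequality then gives $\Pr[\bX_k \ge 1] = o(1)$.

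Finally, if $\bG$ contains no independent set of size $k$, then it certainly contains no independent set of size larger than $k$ (any larger independent set would have a subset of size $k$ that is also independent). Thus on the event $\{\bX_k = 0\}$, which occurs with probability $1-o(1)$, every independent set in $\bG$ has size at most $k = O(\log n)$. There is no real obstacle here; the only mild care needed is in choosing a constant strictly greater than $2$ in front of $\log n$ so that the first moment decays. \qed
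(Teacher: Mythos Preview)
Your proof is correct and follows essentially the same first-moment/Markov approach as the paper. The only cosmetic difference is that the paper takes $k = 4\log n$ and then union-bounds over all $k$ from $4\log n$ to $n$, whereas you take $k = \lceil 3\log n\rceil$ and use the monotonicity observation that an independent set of size $>k$ contains one of size $k$; both finish the argument equally well.
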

\begin{proof}
Let $k\geq 4\log n$. Observe that the expectation of the number of $k$-independent sets is 
\begin{align*}
    \binom{n}{k} 2^{-\binom{k}{2}} \leq 2^{k\log n - \binom{k}{2}} \leq 2^{k(\log n -\frac{k-1}{2})} \leq 2^{4\log n(-\log n+\frac{1}{2})} = 2^{-\Omega(\log^2 n)}
\end{align*}
By Markov's inequality, the probability that there exists a $k$-clique is at most $2^{-\Omega(\log^2 n)}$. By a union bound over $4\log n \leq k \leq n$, the conclusion follows.
\end{proof}

For any set $A\subseteq \set{0,1}^n$, call it \emph{increasing} if $x\in A, y\geq x$ coordinate-wise implies $y\in A$.
The \emph{FKG} inequality gives a lower bound on the probability of intersections of increasing events under a class of measures known as ``log-supermodular'' measures that includes the uniform measure.

\begin{lemma}[FKG inequality (\cite{alon2016probabilistic} specialized to uniform on $\set{0,1}^N$)]
\label{FKG}
Let $A_1,\dots,A_m\subseteq \set{0,1}^N$ be increasing, and $\Pr[\cdot]$ denote the uniform probability measure on $\set{0,1}^N$. Then
\begin{align*}
\Pr[A_1\cap\cdots\cap A_m] \geq \prod_{i=1}^m \Pr[A_i].
\end{align*}
\end{lemma}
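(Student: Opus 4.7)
The plan is to follow the standard double-induction proof of Harris's correlation inequality. First, I would reduce to the case $m=2$ by induction on $m$: since intersections of increasing events are increasing (immediate from the definition, because $y \geq x$ coordinate-wise and $x \in A_1 \cap \cdots \cap A_{m-1}$ force $y \in A_i$ for every $i$), the inductive step amounts to applying the two-event case to the increasing events $A_1 \cap \cdots \cap A_{m-1}$ and $A_m$.

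For the base case $m=2$, I would induct on the dimension $N$. When $N=1$, any increasing subset of $\set{0,1}$ is $\emptyset$, $\set{1}$, or $\set{0,1}$, so the inequality is immediate by direct case analysis (the only nontrivial instance is $A = B = \set{1}$, giving $\tfrac12 \geq \tfrac14$). For the inductive step, condition on the last coordinate $x_N$. Writing $a_i \defeq \Pr[A \mid x_N = i]$ and $b_i \defeq \Pr[B \mid x_N = i]$ for $i \in \set{0,1}$, monotonicity of $A$ and $B$ forces $a_1 \geq a_0$ and $b_1 \geq b_0$. The conditional distribution of $(x_1,\dots,x_{N-1})$ given $x_N = i$ is uniform on $\set{0,1}^{N-1}$, and the restricted events remain increasing, so the inductive hypothesis yields $\Pr[A \cap B \mid x_N = i] \geq a_i b_i$. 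Averaging over $i$, the desired inequality $\Pr[A \cap B] \geq \Pr[A]\Pr[B]$ reduces to $\tfrac{1}{2}(a_0 b_0 + a_1 b_1) \geq \tfrac{1}{4}(a_0 + a_1)(b_0 + b_1)$, which rearranges to $(a_1 - a_0)(b_1 - b_0) \geq 0$ and is exactly what the monotonicity of $A,B$ gives.

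There is no real obstacle here: the argument is a classical one, and the only subtlety is to verify that conditioning on a coordinate preserves both the monotonicity of the events and the uniformity of the remaining coordinates, which is routine. In light of the citation to \cite{alon2016probabilistic}, an equally reasonable alternative is to simply invoke that reference, since the statement of \Cref{FKG} is exactly the specialization of the inequality given there to the uniform measure on $\set{0,1}^N$.
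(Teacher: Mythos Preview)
Your proposal is correct: this is exactly the classical Harris double-induction argument, and every step (closure of increasing events under intersection, the $N=1$ base case, preservation of monotonicity and uniformity under conditioning on a coordinate, and the final $(a_1-a_0)(b_1-b_0)\ge 0$ rearrangement) goes through as you describe.

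The paper itself does not prove \Cref{FKG} at all; it simply states the lemma with a citation to \cite{alon2016probabilistic} and moves on. So your alternative suggestion at the end---just invoking the reference---is in fact what the paper does. Your write-up supplies strictly more than the paper: a self-contained proof where the paper has none. Both are fine for a standard named inequality like this; the only ``difference'' is that you have actually done the work.
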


\begin{lemma}[\cite{penev2022combinatorics}]
\label{fact:factorial_bound}
Let $n\geq1$.
Then
\begin{align*}
    \frac{n^{n}}{e^{n-1}} \leq n! \leq \frac{n^{n+1}}{e^{n-1}}
\end{align*}
\end{lemma}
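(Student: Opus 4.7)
The plan is to prove both inequalities by induction on $n$, using the two classical bounds on $e$: for every $n \geq 1$,
\begin{align*}
\left(1 + \tfrac{1}{n}\right)^{n} \;<\; e \;<\; \left(1 + \tfrac{1}{n}\right)^{n+1},
\end{align*}
which follow from the fact that $(1+1/n)^n$ is increasing in $n$ with limit $e$, while $(1+1/n)^{n+1}$ is decreasing in $n$ with limit $e$. I would not re-prove these; they are standard. The base case $n=1$ of the lemma just amounts to $1 \leq 1! \leq 1$, which is immediate.

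For the inductive step, assume $\frac{n^n}{e^{n-1}} \leq n! \leq \frac{n^{n+1}}{e^{n-1}}$ and multiply through by $n+1$. For the upper bound, we need
\begin{align*}
(n+1)\cdot \frac{n^{n+1}}{e^{n-1}} \;\leq\; \frac{(n+1)^{n+2}}{e^{n}},
\end{align*}
which after clearing terms is equivalent to $e \leq \left(1+\tfrac{1}{n}\right)^{n+1}$, exactly the right-hand side of the classical bound. For the lower bound, we need
\begin{align*}
(n+1) \cdot \frac{n^n}{e^{n-1}} \;\geq\; \frac{(n+1)^{n+1}}{e^{n}},
\end{align*}
which is equivalent to $e \geq \left(1+\tfrac{1}{n}\right)^{n}$, the left-hand side of the classical bound. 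So the inductive step closes in both directions.

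There is no real obstacle here — the only thing one must be careful about is the bookkeeping of the exponents of $n$ and $e$, and choosing the right one of the two $e$-inequalities for each direction (the ``tighter'' side of $e$ goes with the inequality we are trying to establish). If one prefers to avoid induction entirely, an alternative would be to take logarithms and compare $\log(n!) = \sum_{k=1}^n \log k$ to $\int_1^n \log x\, dx = n\log n - n + 1$ using monotonicity of $\log$, which gives the same bounds directly, but the inductive argument seems slightly cleaner.
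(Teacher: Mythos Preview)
Your proof is correct; the induction goes through exactly as you describe, and the two classical bounds on $e$ are precisely what is needed for the two directions of the inductive step.

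Note, however, that the paper does not actually prove this lemma: it is stated with a citation to \cite{penev2022combinatorics} and used as a black-box fact. So there is no ``paper's own proof'' to compare against. Your argument is the standard textbook proof and would be entirely appropriate if a self-contained justification were desired.
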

\section{Properties of the hardcore model on an \erdos-\renyi graph}
\label{sec:hardcore}

Let $\hc$ denote the law of the hardcore model with fugacity $\lambda=1$ on $\bG \sim G(n,1/2)$; that is, for each subset $S \subset [n]$, 
\[
\hc(S) \propto \lambda^{|S|} \Ind[S \text{ independent set in }\bG] = \Ind[S \text{ independent set in }\bG].
\]
Since $\bG \sim G(n,1/2)$ is random, $\hc$ is a random measure.

Let $\bZ$ be the partition function of $\hc$, and let $\bZ_k$ denote the contribution to $\bZ$ from $k$-sized independent sets; as $\lambda=1$, $\bZ_k$ is just the number of $k$-independent sets.
In \Cref{lem:measure_approximation}, we will show that for most graphs $\bG \sim G(n,1/2)$, the size of an independent set drawn from $\hc$ is highly concentrated around the value $k^* := \log n - \log\log n$. 
We intend that $k^*$ is an integer (the closest integer to $\log n - \log\log n$), but we suppress floor/ceiling notation where it does not affect clarity to keep notation clean.

Let $\hca$ be $\hc$ conditioned on drawing an independent set of size in the range $[\ks-a,\ks+a]$. 
Let $\hck$ be $\hc$ conditioned on drawing an independent set of size $k$.

\begin{lemma}
\torestate{
\label{lem:measure_approximation}
Let $a\leq \ks$.
Then with probability $1-o(1)$,
\begin{align*}
    \dtv{\hc}{\hca} = O(2^{-a}) + 2^{-\Omega(\log^2 n)}.
\end{align*}}
\end{lemma}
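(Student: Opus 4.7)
The plan is to reduce the TV bound to a tail bound on the size distribution of $\hc$, and then control that tail via moment computations on $\bZ_k$. By \Cref{whpTV}, it suffices to show that with probability $1 - o(1)$ over $\bG$,
\[
\hc(|S| \notin [\ks - a, \ks + a]) = O(2^{-a}) + 2^{-\Omega(\log^2 n)}.
\]
Writing this probability as $\bZ_{\mathrm{out}}/\bZ$ with $\bZ_{\mathrm{out}} = \sum_{k \notin [\ks-a,\ks+a]} \bZ_k$, I would lower bound the denominator as $\bZ \geq \bZ_{\ks} \geq \tfrac{1}{2}\E\bZ_{\ks}$ (with probability $1 - o(1)$, via a second-moment bound on $\bZ_{\ks}$), and upper bound the numerator using the first-moment calculation below.

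The core calculation is the explicit ratio
\[
\frac{\E \bZ_{k+1}}{\E \bZ_k} = \frac{n-k}{(k+1)\,2^k},
\]
which is $\Theta(1)$ at $k = \ks$ (since $2^{\ks} = n/\log n$) and shrinks by a factor $\approx 2$ per unit step as $k$ moves away from $\ks$. Telescoping gives $\E\bZ_k \leq 2^{-\Omega((k-\ks)^2)}\,\E\bZ_{\ks}$ for $k$ within a constant factor of $\ks$, and summing over $|k-\ks| > a$ yields $\E\bZ_{\mathrm{out}} \leq O(2^{-a})\,\E\bZ_{\ks} + 2^{-\Omega(\log^2 n)}\,\E\bZ_{\ks}$, where the second term absorbs the contribution from $k = o(\ks)$ (for which $\E\bZ_k = \mathrm{poly}(n) \ll \E\bZ_{\ks} = 2^{\Omega(\log^2 n)}$).

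To upgrade this expectation bound to a high-probability statement about $\bZ_{\mathrm{out}}$, I would split the outer sum by range. For $k$ in the ``near'' range $[\ks/2, 4\log n] \setminus [\ks-a, \ks+a]$, a second-moment concentration bound on each $\bZ_k$ combined with a union bound over the $O(\log n)$ values of $k$ (using \Cref{largest_clique} to truncate at $k = 4 \log n$) gives $\bZ_k = O(\E\bZ_k)$ simultaneously. For $k < \ks/2$, where concentration may fail, Markov's inequality with polynomial slack $n^{o(1)}$ suffices, since $n^{o(1)} \cdot 2^{-\Omega(\log^2 n)} = 2^{-\Omega(\log^2 n)}$ remains negligible. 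Dividing by $\bZ \geq \tfrac{1}{2}\E\bZ_{\ks}$ then gives the claimed bound.

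The main technical ingredient I expect to need is a second-moment bound on $\bZ_k$ for $k$ in the near range, specifically $\mathrm{Var}(\bZ_k) = o(1/\log n) \cdot (\E\bZ_k)^2$ so that Chebyshev yields failure probability $o(1/\log n)$ per $k$ and the union bound absorbs the $O(\log n)$ factor. This is the standard second-moment calculation for counting $k$-independent sets in $G(n,1/2)$: split $\E\bZ_k^2$ by the overlap $|S \cap S'|$ and verify that the small-overlap contribution dominates, following the same template needed to show $\bZ_{\ks} \geq \tfrac{1}{2}\E\bZ_{\ks}$ whp in the denominator bound.
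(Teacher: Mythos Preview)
Your approach matches the paper's: compute the ratio $\E\bZ_{k+1}/\E\bZ_k = (n-k)/((k+1)2^k)$ to get geometric decay of $\E\bZ_k$ away from $\ks$, transfer to $\bZ_k$ via second-moment concentration in the bulk and Markov in the tails, and finish with \Cref{whpTV}. One small fix to your range split: the second-moment bound $\Var(\bZ_k)/(\E\bZ_k)^2 = o(1/\log n)$ holds only for $k \le 2\log n - 5\log\log n$ (this is the paper's \Cref{small_clique_concentration}; near $k \approx 2\log n$ the high-overlap terms in $\E\bZ_k^2$ dominate and the variance-to-mean-squared ratio blows up), not all the way to $4\log n$. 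For $k$ above that threshold, use Markov with $\poly(n)$ slack just as you propose for small $k$---since $\E\bZ_k/\E\bZ_{\ks} = 2^{-\Omega(\log^2 n)}$ there too, this contribution lands in the $2^{-\Omega(\log^2 n)}$ term. (Incidentally, second-moment concentration is actually easiest for small $k$, so your Markov treatment of $k < \ks/2$ is harmless but unnecessary; the paper applies Chebyshev uniformly on $[0,\,2\log n - 5\log\log n]$.)
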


\begin{proof}

The probability of sampling an independent set of size $k$ under $\mu$ is given by the ratio $\bZ_k/\bZ$. 
We will argue that all $\bZ_k$ concentrate well enough around $\EZ_k :=\E[\bZ_k]$ for all $k$, and then argue that the contribution of $k$ outside the window $k^* \pm a$ is negligible.

Let $\bE$ be the event that for all $k$ in the set $L \defeq [0,2\log n - 5\log\log n]$, $|\bZ_k - M_k| < \frac{1}{100} M_k$, and that for all other $k\notin L$, $|\bZ_k - M_k| < n^2 M_k$.
For the larger $k \not\in L$, by Markov's inequality, $\Pr[|\bZ_k - \EZ_k| \ge n^2\EZ_k] \le \frac{1}{n^2}$. 
In \Cref{sec:conc} we will argue via the second moment method that stronger concentration is likely for $k \in L$:
\begin{lemma}
\torestate{
\label{small_clique_concentration}
Let $\epsilon>0, k\leq 2\log n-5\log\log n$.
Then
\begin{align*}
\Pr\left[\left|\bZ_k - \EZ_k \right| \geq \epsilon \EZ_k \right] \leq O\left(\frac{k^5}{\epsilon^2 n^2} \right).
\end{align*}
}
\end{lemma}
\noindent Applying a union bound over $k$, $\Pr[\bE] \ge 1 - O(\frac{\log^6 n}{n^2})$.
Henceforth, we condition on $\bE$.

As we will argue in \Cref{sec:conc},
the sequence $\EZ_k$ is well-behaved, decreasing geometrically around $\ks$:
\begin{lemma}
\torestate{
\label{clique_expectation}
Let $K>0$ be an integer.
Suppose $k=\ks + K$.
Then
\begin{align*}
    \frac{\EZ_{k+1}}{\EZ_k} \leq \left(1 + O\left(\frac{\log \log n}{\log n}\right) \right) 2^{-K}.
\end{align*}
If $k=\ks -K$, then
\begin{align*}
\frac{\EZ_{k-1}}{\EZ_k} \leq \left(1 + O\left(\frac{\log \log n}{n}\right) \right) 2^{-K-1}.
\end{align*}
}
\end{lemma}
So for any constant factor $A < 2$, for $n$ large enough, the sequence $\EZ_k$ decreases geometrically by $A$ as $k$ ascends/descends from $\ks$.
Conditioned on $\bE$, when $k \le 2\log n - 5\log\log n$, $0.99 \le \bZ_k/M_k \le 1.01$, so up to $k$ not too large the $\bZ_k$ sequence is also geometrically decreasing around $\ks$ by a (conservative) factor of $1.5$.

It remains to argue about the independent sets of size at least $j\geq2\log n -5\log \log n$.
 By iteratively applying \Cref{clique_expectation}, we see that 
\begin{align*}
\frac{\EZ_j}{\EZ_{\ks}} &\leq \left(1 + O\left(\frac{\log \log n}{\log n}\right) \right)^{j-\ks} \prod_{i=1}^{j-\ks} 2^{-i}  
\leq 2^{O(j\log \log n/\log n) - \binom{j}{2}} 
\leq 2^{-\Omega(j^2)}.
\end{align*}
Since $n^{2(j-\ks)} = 2^{O(j \log\log n)} = 2^{o(j^2)}$, under the event $\bE$ we can transfer the upper bound on $\EZ_j/\EZ_{\ks}$ to $\bZ_j/\bZ_{\ks}$:
\begin{align*}
\frac{\bZ_j}{\bZ_{\ks}} \leq  2^{-\Omega(j^2)}.
\end{align*}

Putting the above together, 
\begin{align*}
\sum_{k=0}^n \bZ_k &= \sum_{k=\ks-a}^{\ks+a} \bZ_k + \sum_{k=0}^{\ks-a-1} \bZ_k + \sum_{k=\ks+a+1}^{2\log n - 5\log\log n} \bZ_k + \sum_{k=2\log n - 5\log\log n + 1}^n \bZ_k \\
&\leq \sum_{k=\ks-a}^{\ks+a} \bZ_k + O(2^{-a}\bZ_\ks) + n 2^{-\Omega(\log^2 n)} \bZ_{\ks} \\
&\leq \left(1+O\left(2^{-a} + 2^{-\Omega(\log^2 n)}\right)\right) \sum_{k=\ks-a}^{\ks+a} \bZ_k.
\end{align*}
Let $\delta = O(2^{-a}) + 2^{-\Omega(\log^2 n)}$.
Then we observe that the probability of sampling an independent set in the range $[\ks-a,\ks+a]$ is at least $1-O(\delta)$.
It follows from standard arguments (which we have recorded in \Cref{whpTV}) that $\dtv{\hc}{\hca} = O(\delta)$.
\end{proof}

\subsection{Concentration of number of $k$-independent sets for small $k$}
\label{sec:conc}

In this section we will prove \Cref{small_clique_concentration,clique_expectation}.

\restatelemma{small_clique_concentration}
\begin{proof}
The proof will proceed by the second moment method.
Let $\Ind_S$ be the indicator of the event that $S$ is an independent set.
Then $\bZ_k = \sum_{S\in \binom{[n]}{k}} \Ind_S$.
Then
\begin{align*}
\Var \bZ_k &= \E\left[ \left(\sum_{S\in \binom{[n]}{k}} \Ind_S\right)^2 \right] - \left(\sum_{S\in \binom{[n]}{k}} 2^{-\binom{k}{2}} \right)^2 \\ 
&= \sum_{S,T\in \binom{[n]}{k}} 2^{-2\binom{k}{2} + \binom{|S\cap T|}{2}}-2^{-2\binom{k}{2}} \\ 
&= \sum_{r=0}^k \binom{n}{2k-r}\binom{2k-r}{k}\binom{k}{r}\left(2^{\binom{r}{2}} - 1 \right)2^{-2\binom{k}{2}} \\
&\le \sum_{r=2}^k \binom{n}{2k-r}\binom{2k-r}{k}\binom{k}{r}2^{\binom{r}{2}} 2^{-2\binom{k}{2}},
\end{align*}
where the third equality follows by counting the number of subsets $S,T$ that intersect on $r$ elements.
Call the $r$th term in the summation $T_r$.
For each $r \ge 2$ and $n$ sufficiently large, we have
\begin{align*}
\frac{T_r}{T_2} 
&\le \frac{(2k-2)(2k-3)\cdots(2k-r)}{(n-2k+r)(n-2k+r-2) \cdots (n-2k+3)}\frac{\binom{k}{r}}{\binom{k}{2}} \frac{2^{r(r-1)/2}}{2}
\le \frac{2}{r!} \left(\frac{2k^2 \cdot 2^{(r+1)/2}}{n-2k}\right)^{r-2} < 1,
\end{align*}
where the final inequality follows because $2 \le r \le k \le 2\log n - 5 \log\log n$.
Applying this to the summation above, $\Var \bZ_k \le k \cdot T_2$.

Now, 
\begin{align*}
\frac{T_2}{(\E Z_k)^2}
&= \frac{\binom{n}{2k-2}\binom{2k-2}{k}\binom{k}{2}2^{\binom{2}{2}} 2^{-2\binom{k}{2}}}{\binom{n}{k}^2 2^{-2\binom{k}{2}}} \\
&= \frac{((n-k)!)^2 (k!)^2}{(n-2k+2)! ((k-2)!)^2 n!} \\
&\leq k^4 \frac{(n-k)!^2}{n!(n-2k+2)!} 
= k^4 \frac{(n-k)(n-k-1)\cdots(n-2k+3)}{n(n-1)\cdots(n-k+1)} 
= O\left(\frac{k^4}{n^2}\right)
\end{align*}

where the last line follows because there are $k$ terms in the denominator and $k-2$ terms in the numerator, where each term in the numerator is smaller than each term in the denominator.
So $\Var Z_k / (\E Z_k)^2 = O(k^5/n^2)$, and the lemma follows by Chebyshev's inequality.
\end{proof}

\restatelemma{clique_expectation}
\begin{proof}
First observe that for any $j$, 
\begin{align*}
    \frac{\EZ_{j+1}}{\EZ_j} 
= \frac{\binom{n}{j+1}2^{-\binom{j+1}{2}}}{\binom{n}{j}2^{-\binom{j}{2}}}
=  \frac{n-j}{(j+1)2^j}.
\end{align*}
In the first case, 
\begin{align*}
    \frac{\EZ_{k+1}}{\EZ_k} = \frac{(n-k-1)\log n}{(\log n - \log\log n+K)\cdot n \cdot 2^K} \leq 2^{-K} \frac{\log n}{\log n - \log \log n} \leq 2^{-K} \left(1 + O\left(\frac{\log \log n}{\log n} \right)\right).
\end{align*}
In the second case, 
\begin{align*}
\frac{\EZ_{k-1}}{\EZ_k} = \frac{2^{k-1}k}{n-k+1} = \frac{n}{n-\log n + \log \log n +K} \frac{\log n - \log \log n - K}{\log n} 2^{-K-1} \leq 2^{-K-1} \left(1 + O\left(\frac{\log n}{n} \right)\right).\quad \qedhere
\end{align*}
\end{proof}

\section{The hardcore model on \erdos-\renyi graphs exhibits disorder chaos}
\label{sec:disorder}

In this section, we will show that the hardcore model on $G(n,1/2)$ exhibits \emph{transport disorder chaos}, ultimately proving \Cref{thm:chaos-intro}.

Transport disorder chaos is a property of a distribution over a family of measures; typically, each measure $\mu_G$ in the family is parameterized by some specific object, such as an $n$-vertex graph $G$, and the distribution $\calD$ over these objects induces the distribution over measures.
Disorder chaos further requires a natural notion of a random noise operator that one can apply to the underlying object $G$, where the noise operator fixes $\calD$.
Applying this noise operator to $G$ naturally induces a new random $G'$, and a new measure $\mu_{G'}$.

In our case, the parameterizing objects are graphs sampled from $G(n,1/2)$, and the measure corresponding to a graph $\bG$ is just $\hc$, the hardcore model.
The noise operator is the natural noise operator over the boolean cube $\{0,1\}^{\binom{[n]}{2}}$, which resamples each edge with some specified probability.
Formally, let $\noise_{1-s}(G)$ be the distribution on graphs $\bG'$ given by resampling each edge in the graph $G$ independently with probability $s$. 
Equivalently, every edge of $G$ is flipped with probability $s/2$. 
We say that the family $\mu_{\bG}$ has disorder chaos if 
\begin{align*}
 \inf_{s \in (0,1)} \liminf_{n\to\infty}  \E_{\substack{\bG\sim G(n,1/2)\\\bG' \sim \noise_{1-s}(\bG)}} \left[\w_2(\mu_{\bG},\mu_{\bG'})\right] > 0.
\end{align*}
Intuitively, the presence of disorder chaos means that if we apply any arbitrarily small random perturbation to $G$, the measure $\mu_G$ will typically experience a noticeable perturbation in $\w_2$ transport distance.

In the $\w_2(\mu,\nu)$ distance (\Cref{def:w2}), the random variable $\bX \sim \mu$ is normalized by $m_2(\mu) \defeq \sqrt{\E_{\bX \sim \mu}[\|\bX\|^2]}$.
We require the following technical lemma, which proves that $m_2(\mu_{\bG})$ concentrates well over the randomness of $\bG$.
\begin{lemma}
\label{hp_concentration_of_size}
For any fixed constants $\alpha >0, \beta \in (0,1)$,
\begin{align*}
\Pr_{\bG\sim G(n,1/2)}\left[m_2(\mu_{\bG})^2 \in (1\pm \beta \pm o_n(1))\log n\right] \geq 1-O\left( \frac{\log^6 n}{n^{2+2\alpha}} \right).
\end{align*}
\end{lemma}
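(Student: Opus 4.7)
The plan is to exploit the identity $m_2(\mu_{\bG})^2 = \E_{\bS\sim\hc}[\|\mathbf{1}_{\bS}\|^2] = \E_{\bS\sim\hc}[|\bS|] = \sum_k k\bZ_k / \sum_k \bZ_k$, and to show that this ratio concentrates tightly around $\ks = \log n - \log\log n$. Since $\ks = (1-o_n(1))\log n$, this will place $m_2(\mu_{\bG})^2$ in the target window $(1\pm\beta\pm o_n(1))\log n$ for any fixed $\beta>0$ and $n$ sufficiently large.

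First, I would condition on the same good event $\bE$ used in the proof of \Cref{lem:measure_approximation}: under $\bE$, $\bZ_k = (1 \pm \tfrac{1}{100})\EZ_k$ for every $k$ in the range $L := [0,\, 2\log n - 5\log\log n]$, and $\bZ_k \le n^2 \EZ_k$ for $k \notin L$, while by \Cref{largest_clique} no independent set has size exceeding $4\log n$. Under $\bE$, the geometric decay of $\EZ_k$ on either side of $\ks$ provided by \Cref{clique_expectation} forces both $\sum_k \bZ_k$ and $\sum_k k\bZ_k$ to be dominated (up to a $1\pm o_n(1)$ multiplicative factor) by an $O(1)$-size window of indices around $\ks$. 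Concretely, $\sum_k \bZ_k = \Theta(\EZ_{\ks})$ and $\sum_k k\bZ_k = (\ks + O(1))\sum_k \bZ_k$, so $m_2(\mu_{\bG})^2 = \ks \pm O(1) = (1 \pm o_n(1))\log n$, which sits inside $(1\pm\beta\pm o_n(1))\log n$.

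It remains to bound $\Pr[\bE^c]$. A direct union bound of \Cref{small_clique_concentration} over $k \le 4\log n$ already gives $\Pr[\bE^c] = O(\log^6 n / n^2)$, matching the stated bound at $\alpha = 0$. To attain the sharper rate $1 - O(\log^6 n / n^{2+2\alpha})$ for $\alpha > 0$, I would upgrade the second-moment argument of \Cref{small_clique_concentration} to a $(2r)$-th moment calculation with $r = \lceil 1+\alpha \rceil$. The central moment $\E[(\bZ_k - \EZ_k)^{2r}]$ expands as a sum over $2r$-tuples of $k$-subsets grouped by their pairwise overlap pattern; for $k = O(\log n)$ the dominant contribution still comes from tuples with only trivial pairwise overlaps, and each additional overlapping pair costs a multiplicative factor $O(k^{O(1)}/n^2)$ after balancing the number of such configurations against the Boltzmann factor $2^{\binom{|\text{overlap}|}{2}}$. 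This should yield $\E[(\bZ_k - \EZ_k)^{2r}] \le (\EZ_k)^{2r} \cdot O(\log^{O(r)} n / n^{2r})$, from which Markov's inequality applied to $(\bZ_k - \EZ_k)^{2r}$, followed by a union bound over $k$, produces the claimed tail.

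The main obstacle is the combinatorial bookkeeping in the higher-moment calculation: enumerating $2r$-tuples of $k$-subsets by intersection pattern produces many cases. The structural point, however, is identical to the $r=1$ calculation already carried out in the paper, namely that for $k = O(\log n)$ the Boltzmann factor arising from any overlap is dwarfed by the $1/n^2$ cost incurred per additional overlapping pair, so that non-trivial overlaps are heavily suppressed and the $2r$-th moment reduces, to leading order, to the $r$-th power of the $r=1$ variance bound of \Cref{small_clique_concentration}.
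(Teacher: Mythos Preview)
Your proposal is correct but takes a more laborious route than the paper to the $n^{2+2\alpha}$ tail. Both arguments start from $m_2(\mu_{\bG})^2 = \E_{\bS\sim\hc}|\bS|$ and condition on concentration of each $\bZ_k$ around $M_k$. You target the sharp conclusion $m_2^2 = \ks \pm O(1)$, which forces $\eps = \Theta(1)$ in \Cref{small_clique_concentration} and so caps the Chebyshev tail at $O(\log^6 n/n^2)$; to go further you propose a $(2r)$-th moment computation. The paper instead exploits that the lemma only asks for the much wider window $m_2^2\in(1\pm\beta)\log n$: since \Cref{clique_expectation} gives $M_j/M_{\ks} \le 2^{-\Omega((j-\ks)^2)}$, any $j$ outside $\ks \pm \beta\log n$ has $M_j/M_{\ks}\le 2^{-\Omega(\beta^2\log^2 n)}$, which swallows any merely polynomial multiplicative slack in the $\bZ_k$. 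The paper therefore simply plugs $\eps = n^\alpha$ into the existing second-moment bound of \Cref{small_clique_concentration}, obtaining $\Pr[|\bZ_k-M_k|\ge n^\alpha M_k]\le O(k^5/n^{2+2\alpha})$ directly, with no higher-moment combinatorics. Your higher-moment route would also work and is a natural idea, but the paper's ``wide target window $\Rightarrow$ coarse concentration suffices $\Rightarrow$ take $\eps$ large in Chebyshev'' observation is the shortcut you missed. (As written, the paper's argument is itself slightly loose---with $\eps=n^\alpha$ the event $E$ yields no usable \emph{lower} bound on $\bZ_{\ks}$, so the passage from $M_j/M_{\ks}$ to $\bZ_j/\bZ_{\ks}$ needs one more ingredient---but this is a minor patch orthogonal to the strategic comparison.)
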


We'll prove \Cref{hp_concentration_of_size} after proving the following Theorem, which is a restatement of \Cref{thm:chaos-intro}:

\begin{theorem}
\begin{align*}
 \inf_{s \in (0,1)} \liminf_{n\to\infty}  \E_{\substack{\bG\sim G(n,1/2)\\\bG' \sim \noise_{1-s}(\bG)}} \left[\w_2(\mu_{\bG},\mu_{\bG'})^2\right] = 2.
\end{align*}
\end{theorem}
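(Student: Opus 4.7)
My plan is to reduce bounding $\w_2^2$ from below to upper-bounding the expected set overlap $\E_\pi[|\bS \cap \bS'|]$ over couplings, and then to control this overlap by a quantity that depends only on $\bS$. Expanding the normalization, I rewrite
\[
\w_2(\mu_{\bG},\mu_{\bG'})^2 \;=\; 2 \;-\; 2\,\sup_{\pi \in C(\mu_{\bG}, \mu_{\bG'})} \E_\pi\!\left\langle \frac{\bX}{m_2(\mu_{\bG})},\, \frac{\bY}{m_2(\mu_{\bG'})}\right\rangle,
\]
using that the normalized vectors each have expected squared norm $1$. Since $\bX = \mathbf{1}_{\bS}, \bY = \mathbf{1}_{\bS'}$ are non-negative indicators of independent sets, the inner product is always $\geq 0$, yielding the deterministic upper bound $\w_2^2 \leq 2$ (attained by the product coupling) and hence the easy direction $\inf_s \liminf_n \E\w_2^2 \leq 2$. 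The work lies in the matching lower bound.

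The key observation is coupling-independent: since $\bS \cap \bS' \subseteq \bS$ is itself an independent set of $\bG'$,
\[
|\bS \cap \bS'| \;\leq\; \alpha(\bG'|_{\bS}),
\]
where $\alpha(\bG'|_{\bS})$ denotes the size of the maximum independent set in the subgraph of $\bG'$ induced on vertex set $\bS$. Because this bound does not involve $\bS'$, I can bring the supremum inside: $\sup_\pi \E_\pi[|\bS \cap \bS'|] \leq \E_{\bS \sim \mu_{\bG}}[\alpha(\bG'|_{\bS})]$. Conditional on $\bS$ being an independent set of $\bG$, every pair in $\bS$ is a non-edge of $\bG$ and is therefore independently an edge of $\bG'$ with probability exactly $s/2$; hence $\bG'|_{\bS} \sim G(|\bS|, s/2)$. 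I then invoke the standard first-moment/union-bound argument for cliques in $G(m,p)$ to show $\alpha(G(m, s/2)) \leq C_s \log m$ with probability $1 - e^{-\Omega_s(\log^2 m)}$.

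Finally, I assemble the pieces. \Cref{lem:measure_approximation} with $a = 2\log\log n$ places $|\bS|$ within $O(\log\log n)$ of $\ks = \log n - \log\log n$ with probability $1 - o_n(1)$ under $\bS \sim \mu_{\bG}$; \Cref{hp_concentration_of_size} gives $m_2(\mu_{\bG})^2, m_2(\mu_{\bG'})^2 = (1 - o_n(1))\log n$ with high probability over $(\bG,\bG')$; and \Cref{largest_clique} caps $|\bS| \leq O(\log n)$ with high probability over $\bG$. On the intersection of these events, these estimates combine to give
\[
\sup_\pi \E_\pi\!\left\langle \frac{\bX}{m_2(\mu_{\bG})},\, \frac{\bY}{m_2(\mu_{\bG'})}\right\rangle \;\leq\; \frac{O_s(\log\log n)}{(1-o_n(1))\log n} \;=\; o_n(1),
\]
so $\w_2^2 \geq 2 - o_n(1)$ on this high-probability event. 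Combined with the deterministic bound $\w_2^2 \leq 2$ on its complement, $\E\w_2^2 \geq 2 - o_n(1)$ for each fixed $s \in (0,1)$, giving $\liminf_n \E\w_2^2 = 2$ and hence $\inf_s \liminf_n \E\w_2^2 = 2$. The main bookkeeping hurdle I anticipate is ensuring that the rare events (atypical $|\bS|$, or $\alpha(\bG'|_{\bS})$ exceeding its typical value) do not dominate $\E_{\bS}[\alpha(\bG'|_{\bS})]$; each such contribution is controlled by the exponential concentration in the lemmas above together with a union bound over $k \leq O(\log n)$.
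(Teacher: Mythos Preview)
Your approach is correct and essentially identical to the paper's: both reduce to bounding $|\bS\cap\bS'|$ by $\alpha(\bG'|_{\bS})$, observe that $\bG'|_{\bS}\sim G(|\bS|,s/2)$ once $\bS$ is an independent set of $\bG$, and combine with the size-concentration lemmas---the paper packages this via a ``survival'' threshold $\log^2 k$ and applies Markov to the surviving fraction $\bF_k$, whereas you work with $\alpha$ directly, but the content is the same. One minor slip in your final assembly: to conclude $\E\w_2^2\geq 2-o_n(1)$ you should invoke $\w_2^2\geq 0$ on the complement of the good event, not $\w_2^2\leq 2$.
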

\begin{proof}
By definition of the normalized transport distance and our (natural) choice to represent $\mu_{G}$ as a measure over $\{0,1\}^n$,
\[
\w_2(\mu_{G},\mu_{G'})^2 
= \inf_{\pi}\E_{(S,S') \sim \pi} \left\|\frac{\Ind_S}{\sqrt{\E_{S \sim \mu_G}\|\Ind_S\|^2}}- \frac{\Ind_{S'}}{\sqrt{\E_{S' \sim \mu_{G'}}\|\Ind_{S'}\|^2}}\right\|^2
= 2 - 2\sup_{\pi}\E_{(S,S') \sim \pi} \frac{|S \cap S'|}{\sqrt{\E_{S\sim \mu_{G}}|S| \cdot \E_{S'\sim\mu_{G'}}|S'|}}.
\]
The upper bound $\E[\w_2(\mu_{\bG},\mu_{\bG'})^2] \le 2$ follows immediately from the non-negativity of set sizes.
For the lower bound, the high-level argument is that for a typical $\bG$, $\hc$ places most of its mass on independent sets of size roughly $\log n$. 
Resampling each edge in an independent set of size $\log n$ with constant probability $s>0$ typically only leaves behind an independent set of size $o(\log n)$.
Therefore, any coupling between $\hc$ and $\hcp$ will have to map most independent sets of size $\log n$ in $\hc$ to nearly disjoint independent sets also of size $\log n$ in $\hcp$.
We now formalize this argument.

We say that a $k$-independent set in $\bG$ "survives" resampling if it contains a sub-independent set of size at least $\log^2 k$ in $\bT_{1-s} \bG$. 
We define a random variable that corresponds to the fraction of $k$-independent sets in $\bG$ that survive the resampling.
Let $\bF_k$ be the fraction of $k$-independent sets that survive resampling in $\bG$. 
Of course, this only makes sense if $\bG$ has a $k$-independent set to begin with. 
Thus, let $\bF_k = 1$ if no $k$-independent sets exist, and otherwise let
$\bF_k = \E_{\bG'} \E_{\bS\sim \mu_{\bG|k}} \Ind_{E_{\bS}}$ if $\bG$ where $E_S$ is the event that $S$ survives (recall that $\mu_{\bG|k}$ is $\mu_{\bG}$ conditioned on sampling a $k$-set). 
Let $A_k$ be the event that there is a $k$ independent set in $\bG$. 

We will now argue that
\begin{align*}
\E_{\bG} \bF_k = \E_{\bG}\left[\E_{\bG'} \E_{\bS\sim \mu_{\bG|k}} \Ind_{E_{\bS}} | A_k\right]\Pr[A_k] + \E_{bG'}[\bF_k |A_k^C]\Pr[A_k^C] = o(1).
\end{align*}
for any $\log n-2\log\log n \leq k \leq \log n +\log \log n$. From \Cref{small_clique_concentration}, we see that the term corresponding to $A_k^C$ is $o(1)$. 
For the $A_k$ term, observe that
\begin{align*}
\E_{\bG}\left[\E_{\bG'} \E_{\bS\sim \mu_{\bG|k}} \Ind_{E_{\bS}} | A_k\right] &= \E_{\bG}\left[ \E_{\bS\sim \mu_{\bG|k}} \E_{\bG'} \Ind_{E_{\bS}} | A_k\right] \\
&= \E_{\bG}\left[ \E_{\bS\sim \mu_{\bG|k}} \Pr[\bS\text{ survives in }\bG'] | A_k\right] \\
&= \Pr_{\bS' \sim \noise_{1-s}(\overline{K_k})}[\bS' \text{ contains an i.s. of size at least } \log^2 k],
\end{align*}
where $\overline{K_k}$ is the empty graph on $k$ vertices.
As $s > 0$ is a constant independent of $n$, by a union bound
\begin{align*}
\Pr_{\bS' \sim \noise_{1-s}\overline{K_k}}[\bS' \text{ contains an i.s. of size at least } \log^2 k] 
&\leq \binom{k}{\log^2 k} \cdot \left(1-\tfrac{s}{2}\right)^{\binom{\log^2 k}{2}} \\ 
&\leq \exp\left(\log^2 k \cdot \ln k - \binom{\log^2 k}{2} \cdot \ln \frac{1}{1-\tfrac{s}{2}}\right)\\
&= \exp(-\Omega(\log^4 k)) 
= o(1).
\end{align*}
Thus $\E_{\bG} \bF_k = o(1)$.

We now argue that $\mu_{\bG},\mu_{\bG'}$ concentrate sufficiently well on $\log n$-sized sets, and that those sets typically do not survive.
Let $\gamma,\delta,\beta>0$ be small constants. 
From \Cref{lem:measure_approximation}, we know that for any $a = O_n(1)$ with probability $1-o(1)$, 
\begin{align*}
    \dtv{\hc}{\hca} = O(2^{-a}) + 2^{-\Omega(\log^2 n)}.
\end{align*}
Thus, by setting $a$ to be a large enough constant, we find that $\Pr_{\bS\sim\hc}[|\bS| \in [\ks-a,\ks+a]] \geq 1-\gamma$. 
Let $E$ be the event that this holds in both $\mu_{\bG}$ and $\mu_{\bG'}$.
Now, let the event $E'$ be the event that $\bF_k < \delta$ for all $k\in[\ks-a,\ks+a]$. 
By applying Markov's inequality for each $k$ and then a union bound, $\Pr[E'] = 1-o(1)$. 
Also, let $E''$ be the event that $\E_{S \sim{\hc}} \norm{S}_2^2, \E_{S' \sim \mu_{\bG'}} \norm{S'}_2^2 \in (1 \pm \beta) \log n$. 
By \Cref{hp_concentration_of_size}, $\Pr[E''] = 1-o(1)$. 
Finally, let $E'''$ be the event that $\bG,\bG'$ have cliques only of size $O(\log n)$; by \Cref{largest_clique}, this happens with probability $1-o(1)$.
Thus, $\Pr[E\cap E' \cap E'' \cap E'''] = 1-o(1)$.

Condition on $E\cap E'\cap E'' \cap E'''$. 
We will now analyze any possible coupling $\pi=\pi(\bG,\bG')$ of $\mu_{\bG},\mu_{\bG'}$. 
Let $L = [\ks-a,\ks+a]$. 
Observe that $1-\gamma$ of the proportion of independent sets of $\bG$ and $\bG'$ have size in $L$. 
Moreover, a $(1-\gamma)(1-\delta)$ proportion of the independent sets in $\bG$ have both size in $L$ and no independent set of more than size $(1+\beta)(\log\log n)^2$ contained inside it. 
Let this set of independent sets in $\bG$ be $\calS$. 
As just argued, $\calS$ has probability mass at least $(1-\gamma)(1-\delta)$; $\pi$ can assign at most $\gamma$ of the mass of $S$ to independent sets in $\bG'$ with size outside of $L$, so $(1-\gamma)(1-\delta)-\gamma$ of the mass in $\calS$ is on independent sets of size in $L$ in $\hcp$. 
Thus, $1-\eta \defeq (1-\gamma)(1-\delta)-\gamma$ of the mass of independent sets in $\calS$ is coupled to independent sets in $\bG'$ with at most $(1+\beta)(\log\log n)^2$ shared nodes. 
Therefore,
\begin{align*}
\w_2(\mu_{\bG}, \mu_{\bG'})^2 
\geq 2 - 2\frac{(1-\eta)\cdot(1+\beta)(\log\log n)^2 + \eta\cdot O(\log n)}{(1\pm \beta)\log n} \geq 2 - o(1) - \frac{\eta}{1+\beta}
\end{align*}
The above inequality holds for the expectation as well because $\Pr[E\cap E' \cap E''] = 1-o(1)$. Thus,
\begin{align*}
 \liminf_{n\to\infty}  \E_{\substack{\bG\sim G(n,1/2)\\\bG' \sim \noise_{1-s}(\bG)}} [\w_2^2(\mu_{\bG},\mu_{\bG'})] \geq 2 - \frac{\eta}{1+\beta}.
\end{align*}
Since this is true for every small enough $\gamma,\delta,\beta>0$, we can take $\eta$ arbitrarily small. 
As the above applies to any constant $s > 0$, the conclusion follows.
\end{proof}

We now prove \Cref{hp_concentration_of_size}.

\begin{proof}[Proof of \Cref{hp_concentration_of_size}]
We must establish the concentration over $\bG$ of $m_2(\mu_{\bG})^2 = \E_{\bS\sim\hc}\|\Ind_{\bS}\|^2 = \E_{\bS \sim \hc}|\bS|$.
Recall again that $\Pr[|\bS| = k] = \bZ_k/\bZ$, and we have defined $M_k = \E_{\bG}[\bZ_k]$.
By setting $\eps=n^\alpha$ in \Cref{small_clique_concentration}, we see that for all $0\leq k\leq 2\log n - 5\log \log n$,
\begin{align*}
\Pr[|\bZ_k-\EZ_k| \geq n^\alpha \EZ_k] \leq O\left(\frac{\log^5 n}{n^{2+2\alpha}}\right).
\end{align*}
Furthermore, observe that $\Pr[|\bZ_k-\EZ_k| \geq n^4 \EZ_k]\leq 1/n^4$ by Markov's inequality for all $k>2\log n-5\log \log n$. 
Let the event that none of these events occur be $E$; observe that by a union bound, $\Pr[E] \geq 1-O((\log^6 n) / n^{2+2\alpha})$. Henceforth, we condition on $E$.

By \Cref{clique_expectation},
\begin{align*}
\frac{\EZ_j}{\EZ_{\ks}} &\leq \left(1 + O\left(\frac{\log \log n}{\log n}\right) \right)^{j-\ks} \cdot \prod_{i=1}^{j-\ks} 2^{-i}  
\leq 2^{O(j\log \log n/\log n) - \binom{j-\ks}{2}}
\leq 2^{-(j-\ks)^2/2 + O(j\log\log n /\log n)}.
\end{align*}
so for any $\ks+\beta\log n \leq j \leq 2\log n-5\log \log n$, we see that $\EZ_{j}/\EZ_{\ks} \leq 2^{-\beta^2 \log^2 n/2 + O(\log\log n)}$. 
The same bound holds for any $j\leq \ks-\beta\log n$ using similar arguments. 
By using the same computation above, any $j>2\log n - 5\log \log n$, $\EZ_j/\EZ_{\ks}\leq 2^{-\Omega(\log^2 n)}$. 
Under $E$, the same bounds hold in each case for $\bZ_j/\bZ_{\ks}$. Let $A = (\ks-\beta\log n, \ks + \beta\log n)$.
\begin{align*}
\Pr_{\bS \sim \mu_{\bG}}[|\bS|\notin A] = \sum_{j\notin A} \Pr[|\bS| = j] \leq \sum_{j\notin A} 2^{-\Omega(\beta^2\log^2n)}\Pr[|\bS|=\ks] \leq 2^{-\Omega(\beta^2\log ^2 n)}.
\end{align*}
Our conclusion follows by an averaging argument.
On the one hand,
\begin{align*}
m_2({\hc})^2 
= \E_{\bS \sim \mu_{\bG}} |\bS|
&\le (\ks + \beta \log n) + (\max_{S} |S|)\cdot \Pr[|\bS| > \ks + \beta n]\\
&\le \ks + \beta \log n + n \cdot 2^{-\beta^2 \log^2 n/2 + O(\log\log n)}
= (1 + \beta + o_n(1))\log n,
\end{align*}
And on the other,
\begin{align*}
m_2({\hc})^2 
= \E_{\bS \sim \mu_{\bG}} |\bS|
&\ge (\ks - \beta \log n)\Pr[|\bS| \ge \ks - \beta\log n] \\
&\ge (\ks - \beta \log n)(1 - 2^{-\beta^2 \log^2 n + O(\log n\log\log n)})
= (1 - \beta - o_n(1))\log n. \qedhere
\end{align*}
\end{proof}
\section{Glauber Dynamics samples in transport distance}
\label{sec:glauber}

In this section we will prove \Cref{thm:glauber-sample}.
We'll introduce the Glauber Dynamics and show that it can be coupled with a simple greedy algorithm.
Then, we'll show that the greedy algorithm is close to uniform over independent sets of size $k^- = \log n - O(\log\log n)$.
Finally, we'll show that there is a good $\w_2$ coupling between the uniform distribution on independent sets of size $k^-$ and $\mu_{\bG}$.

\subsection{Coupling the Glauber Dynamics with the \greedy algorithm}

\Cref{thm:glauber-sample} states that Glauber dynamics (\Cref{alg:glauber}) (run until a stopping condition is reached) samples in $\w_2$ from $\hc$. 
However, we will find it more convenient to analyze a randomized greedy algorithm (\Cref{alg:gradient_ascent}, henceforth known as \greedy). In this section, we state both algorithms and show that they can be coupled with high probability.

For a measure $\mu$, recall that the generic Glauber dynamics is a Markov chain where at any state $\sigma \in \set{0,1}^n$, we pick a coordinate $\bi\sim[n]$ uniformly at random, and then resample from the conditional measure of $\mu$ where we enforce that for our new sample $X$, $X_j=\sigma_j$ for all $j\neq i$. 
It is easy to check that the stationary distribution of this Markov chain is $\mu$ (e.g. see \cite{LP17}, chapter 3).
For the hardcore model on any graph $G$ with $\lambda=1$, it is easy to check that \Cref{alg:glauber} run on graph $G$ exactly corresponds to the generic Glauber dynamics with measure $\mu_G$. 
Therefore, the stationary measure of \Cref{alg:glauber} run on $G$ is precisely $\mu_G$ (although we do not ever use this fact in our analysis).

\greedy builds an independent set, starting from $\emptyset$ and iteratively adding a uniformly random vertex as long as one is available to add.
The only difference between this algorithm and Glauber dynamics is that Glauber dynamics samples a vertex uniformly from $[n]$ at each step and decides whether to add or remove it; removal happens so rarely that it is easy to construct a coupling of \greedy and Glauber dynamics.
This coupling will allow us to translate results from \greedy to Glauber dynamics.

\begin{algorithm}
\caption{Glauber dynamics}\label{alg:glauber}
\begin{algorithmic}[1]
\Require A graph $G$, a size $s$, and a time $T$
\Ensure FAIL or an independent set of size $s$
\State Let $S_0 = \emptyset$.
\For{$t = 1$ \textbf{to} $T$}
    \State $v \sim \text{Unif}([n])$
    \If{$S_{t-1}\cup v$ is not an independent set}
        \State $S_t = S_{t-1}$
    \EndIf
    \State $S_t \sim \text{Unif}\set{S_{t-1}\cup v, S_{t-1}\backslash v}$
    \If{$|S_t|=s$}
        \State Return $S_t$
    \EndIf
\EndFor
\State Return FAIL
\end{algorithmic}
\end{algorithm}

\begin{algorithm}
\caption{\greedy}\label{alg:gradient_ascent}
\begin{algorithmic}[1]
\Require A graph $G$ and size $s$
\Ensure FAIL or an independent set of size $s$
\State Let $S_0 = \emptyset$
\For{$i = 1$ \textbf{to} $s$}
    \State $X_i = \set{v \in [n]\backslash S \mid v \cup S \text{ is an independent set in }G}$
    \If{$X_i = \emptyset$}
        \State \Return FAIL
    \EndIf
    \State $v_i \sim \text{Unif}(X_i)$
    \State $S_i = S_{i-1} \cup v_i$
\EndFor
\State \Return $S_s$
\end{algorithmic}
\label{gradient_ascent}
\end{algorithm}

For a fixed graph $G$, size $s>0$ and time $T>0$, let $\Glauber(G,s,T), \greedy(G,s)$, denote the random variable corresponding to the output of \Cref{alg:glauber}, \Cref{alg:gradient_ascent} respectively.

In both Glauber Dynamics and \greedy, it will be important to us to understand how many vertices are not neighbors with any vertex of $S_t$ (and thus can be added to $S_t$).
This motivates the following definition:
\begin{definition}
Given a graph $G=(V,E)$ and any $S\subseteq V$, the up-degree of $S$ is defined as 
\begin{align*}
    \mathrm{deg}^{\uparrow}(S) = |\set{v\in V \backslash S \mid (u,v)\notin E(G) \text{ for all } u\in S}|.
\end{align*}
\end{definition}
\noindent Note that the up-degree is also a function of the graph (though the notation does not explicitly reflect this).

Let $d_j = \E[\updeg([j])]$.
By linearity of expectation, $d_j = (n-j)2^{-j}$.
A combination of Chernoff and union bounds provide simultaneous concentration of $\updeg(S)$ around $d_{|S|}$ for all $S$ not too large.

\begin{lemma}
\label{degree_concentration}
Let $\bG\sim G(n,1/2), C>6$.
Then with probability $1-o(1)$, all sets $S\subseteq[n]$ of size $0,\dots,\log n - C\log \log n$ satisfy $|\updeg(S) - d_{|S|}| \leq d_{|S|}^{2/3}$.
\end{lemma}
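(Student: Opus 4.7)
\textbf{Proof proposal for \Cref{degree_concentration}.} The plan is a straightforward two-level argument: a Chernoff bound for a fixed $S$, followed by a union bound over all $S$ of size at most $\log n - C\log\log n$.

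First, fix any $S \subseteq [n]$ with $|S|=j$. The up-degree $\updeg(S)$ counts the vertices $v \in [n]\setminus S$ such that $v$ is non-adjacent to every element of $S$. Under $\bG \sim G(n,1/2)$, the events $\{v \text{ is non-adjacent to all of } S\}$ are mutually independent across $v \in [n]\setminus S$, each occurring with probability $2^{-j}$. Hence $\updeg(S) \sim \mathrm{Bin}(n-j, 2^{-j})$ with mean exactly $d_j=(n-j)2^{-j}$. Applying the standard multiplicative Chernoff bound with deviation parameter $\eps = d_j^{-1/3}$ yields
\begin{align*}
\Pr\!\left[|\updeg(S) - d_j| > d_j^{2/3}\right] \leq 2\exp\!\left(-\Omega(d_j^{1/3})\right).
\end{align*}

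Second, I union-bound over all $S$ with $|S|=j$ and then over $j \in \{0,1,\ldots,\log n - C\log\log n\}$. For any such $j$, since $j \leq \log n$, we have $d_j = (n-j)2^{-j} \geq \tfrac{1}{2} n \cdot 2^{-j} \geq \tfrac{1}{2}(\log n)^C$, so $d_j^{1/3} \geq \tfrac{1}{2}(\log n)^{C/3}$. The number of subsets of size $j$ is at most $\binom{n}{j} \leq n^j = 2^{j\log n}$, so
\begin{align*}
\Pr\!\left[\exists\, S,\ |S|=j,\ |\updeg(S)-d_j|>d_j^{2/3}\right] \leq 2 \cdot 2^{j\log n}\exp\!\left(-\Omega((\log n)^{C/3})\right).
\end{align*}
The exponent in the exponential beats $j\log n \leq (\log n)^2$ precisely when $C/3 > 2$, i.e., when $C>6$, which is exactly the assumption of the lemma. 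Under this condition the right-hand side is $o(1/\log n)$, and a further union bound over $j \in \{0,\ldots,\log n - C\log\log n\}$ (only $O(\log n)$ values) gives total failure probability $o(1)$.

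The only delicate point is calibrating the Chernoff tail against the union bound: the cost of the union bound is at most $2^{j\log n}$ up to $j = \log n$, while the Chernoff saving is roughly $\exp(-d_j^{1/3})$ which is smallest at the largest allowable $j$. The condition $C>6$ is exactly what makes the Chernoff saving dominate at the boundary of the range, so the hypothesis appears naturally from the calculation rather than being imposed artificially. Everything else is bookkeeping.
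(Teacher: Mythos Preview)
Your proposal is correct and follows essentially the same approach as the paper: observe that $\updeg(S) \sim \mathrm{Bin}(n-j,2^{-j})$, apply a multiplicative Chernoff bound with deviation $\eps = d_j^{-1/3}$, and then union-bound over all $S$ of each size and over sizes, with the condition $C>6$ emerging exactly from comparing the Chernoff exponent $\Omega((\log n)^{C/3})$ against the union-bound cost $O((\log n)^2)$. The only cosmetic difference is that the paper bounds the sum over $j$ by noting the summands are increasing and taking the last term, whereas you use a uniform lower bound $d_j^{1/3} \ge \tfrac12(\log n)^{C/3}$ valid throughout the range; both are equally valid and the arithmetic is identical.
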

\begin{proof}
Fix a set $S$ of size $j$.
Then observe that $\updeg(S)$ is distributed as $\Binom(n-j,2^{-j})$.
So by a standard Chernoff bound,
\begin{align*}
\Pr[|\updeg(S) - d_j| \geq \delta_j d_j] \leq 2\exp(-\delta_j^2 d_j/3).
\end{align*}
Set $\delta_j=d_j^{-1/3}$.
There are $\binom{n}{j} \leq 2^{j\log n}$ subsets of size $j$.
The union bound tells us that 
\begin{align*}
\Pr\left[\exists S \in [n]^{\le \log n - C \log\log n}\, \text{ s.t. } \frac{|\updeg(S) - d_{|S|}|}{d_{|S|}} \geq \delta_{|S|}\right] &\leq \sum_{j=0}^{\log n - C\log \log n} \binom{n}{j} \exp(-\delta_j^2d_j/3) \\
&\leq \sum_{j=0}^{\log n - C\log \log n} \exp(\Theta(j\log n) -d_j^{1/3}/3).
\end{align*}
Let the $j$th term be $T_j$.
Observe that $T_j$ is increasing because $j\log n$ is increasing in $j$, and $d_j$ is decreasing in $j$.
Thus it suffices to understand the last term.
Let $k=\log n - C\log \log n$.
Then
\begin{align*}
d_k = (n-k)2^{-k} = \frac{n-k}{n}\log^Cn.
\end{align*}
Moreover,
\begin{align*}
\Theta(k\log n) = \Theta(\log^2 n).
\end{align*}
Therefore, if $C>6$ is a constant, then $T_k = \exp(-\Omega(\log^{C/3-2} n)) = o(1)$, so we are done.
\end{proof}

Also, we will introduce some notation that we will need to analyze Glauber dynamics. 
Recall from the description of \Cref{alg:glauber} that the state at iteration $t$ is the set $\bS_t$. Let the stopping time 
\begin{align*}
\bT_i = \inf_{t\geq 0}\set{|\bS_t|=i}
\end{align*}
denote the first time that Glauber dynamics reaches a set of size $i$.

\begin{theorem}
\label{coupling_algorithms}
Fix $0<s<\log n - C\log \log n, C>6, T\geq100 \cdot 2^s \cdot\log n$. With probability $1-o(1)$ over the graph $\bG \sim G(n,1/2)$, there exists a coupling $\pi =\pi(\bG)$ of $\Glauber(\bG,s, T), \greedy(\bG,s)$ such that
\begin{align*}
\Pr_{\pi}[\Glauber(\bG,s) = \greedy(\bG,s)] = 1-o(1).
\end{align*}
\end{theorem}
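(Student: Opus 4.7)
I will construct an explicit coupling by identifying the ``add events'' of Glauber dynamics with the steps of \greedy, and then show that with high probability Glauber reaches size $s$ in at most $T$ steps without ever performing a ``remove'' operation.

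Re-express each Glauber step as sampling a pair $(v_{t+1}, b_{t+1})$ uniformly from $[n] \times \{0,1\}$: call this step an \emph{add event} if $v_{t+1} \notin S_t$, $S_t \cup \{v_{t+1}\}$ is independent, and $b_{t+1} = 1$; a \emph{remove event} if $v_{t+1} \in S_t$ and $b_{t+1} = 1$; and a \emph{stall} otherwise. This rule reproduces \Cref{alg:glauber} exactly, and conditional on step $t+1$ being an add event at state $S_t$, the added vertex is uniform over the set of non-neighbors of $S_t$ lying outside $S_t$ (a set of size $\updeg(S_t)$). Define the coupling by letting greedy's $i$-th chosen vertex equal the vertex of Glauber's $i$-th add event (and if such an event does not occur before time $T$, sample greedy's remaining steps independently). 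An induction on $i$ shows that, provided no remove events occurred before the $i$-th add event, Glauber's state just before it agrees with greedy's state after step $i-1$; therefore their up-neighborhoods coincide, and the uniform distribution over this common set is compatible with both algorithms, so the chosen vertices can be taken identical.

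It remains to bound the failure probability, namely the probability that Glauber either has a remove event before the stopping time $\bT_s$ or fails to reach size $s$ by time $T$. Condition on the high-probability event $E_1$ from \Cref{degree_concentration} that $\updeg(S) = (1 \pm o(1)) d_{|S|}$ for every $S$ with $|S| \le s$. Because \Cref{alg:glauber} halts upon reaching size $s$, one has $|S_t| \le s$ throughout the run, so the probability that a given step is a remove event is at most $s/(2n)$. Hence the expected number of remove events in $T$ steps is at most $T \cdot s/(2n) = 50 s\log n \cdot 2^s/n \le 50 \log^{2-C} n = o(1)$ since $C > 6$, so Markov's inequality gives that no remove event occurs with probability $1-o(1)$. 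Conditioning additionally on this event, the time between the $(i-1)$-th and $i$-th add event is geometric with parameter $\updeg(S_{t})/(2n) \ge (1-o(1))d_{i-1}/(2n)$, hence has mean at most $4 \cdot 2^i$ for $n$ large. Summing geometrically, $\E[\bT_s] \le 8 \cdot 2^s$, and another Markov bound gives $\bT_s \le T = 100 \cdot 2^s \log n$ with probability $1 - O(1/\log n) = 1-o(1)$.

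Combining, with probability $1-o(1)$ over $\bG$ and the coupling we have $\Glauber(\bG,s,T) = \greedy(\bG,s)$. The main technical subtlety lies in carefully verifying the inductive coupling construction so that greedy's state-dependent sampling is consistent with Glauber's marginal; the bounds on remove events and hitting times are routine Markov arguments relying only on up-degree concentration.
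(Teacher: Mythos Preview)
Your approach is correct and essentially the same as the paper's: both couple \greedy's vertices with Glauber's add events and then use the up-degree concentration of \Cref{degree_concentration} to show that with high probability no remove occurs and size $s$ is reached in time (the paper bounds the per-stage drop-before-add and waiting probabilities and union-bounds, whereas you use aggregate Markov bounds, which is an equally valid organization). Two small slips to tighten: your equality $T\cdot s/(2n)=50\,s\log n\cdot 2^{s}/n$ assumes $T=100\cdot 2^{s}\log n$ rather than $T\ge 100\cdot 2^{s}\log n$, so for general $T$ you should bound removes only over the first $100\cdot 2^{s}\log n$ steps (harmless since the algorithm halts upon hitting size $s$); and ``conditioning on no remove, the inter-add time is geometric'' is not literally true under the global conditioning, but the intended argument---on that event the trajectory coincides with the monotone process in which removes are replaced by stalls, whose inter-add times are genuinely geometric---goes through.
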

\begin{proof}
We will first provide intuition for why such a coupling is possible.
Glauber dynamics chooses a vertex $i$ uniformly from $[n]$ at each step; if the vertex is in the independent set it is (stochastically) dropped, and if it is not in the independent set then it is (stochastically) added so long as it extends the independent set. 
The reason we are able to couple Glauber dynamics and \greedy is because so long as the independent set has size $\le s =\log n - 2 \log\log n$, there are many more vertices that could be added to the independent set than vertices within the independent set itself, so it is unlikely that Glauber dynamics drops vertices before $s$ are added. 
Moreover, because it does not take too long to add a vertex at any step, with high probability, Glauber finds a set of size $s$ in time $T$ as above.

With the above in mind, we will construct our coupling.
For $i=0,\dots,s$, let $E_{i}$ be the event that for all $t\in [\bT_{i-1},\bT_{i})$, $|\bS_t| = i-1$.
Thus, if $\cap_{i=1}^s E_i$ occurs, then up to reaching size $s$, Glauber dynamics does not ever drop any vertices.
Let $D_i$ be the event that beginning from $\bT_{i-1}$, the time it takes for Glauber to add or drop a vertex is at most $(10\log n)n/d_i$; therefore, if $\cap_{i=1}^s E_i$ and $\cap_{i=1}^s D_i$ occurs, it takes at most
\begin{align*}
    \bT_s \leq \sum_{i=0}^{s-1} \bT_{i+1}-\bT_i \leq 10\log n \sum_{i=0}^{s-1} \frac{n}{n-i}2^i \leq  30\cdot 2^s \cdot \log n
\end{align*}
steps of Glauber dynamics to reach an independent set of size $s$.
Our coupling will then consist of running \greedy as usual, where we let $\bv_1,\dots,\bv_s$ be the sequence of vertices it adds.
If $(\cap_{i=1}^s E_i) \cap (\cap_{i=1}^s D_i)$ occurs, which we claim happens with probability $1-O(\log^{2-C}n)$, we let the vertex added at time $\bT_i$ be $\bv_i$, so that the outputs of \greedy and Glauber dynamics are the same. 
If it does not occur, we just run Glauber dynamics and \greedy as usual independently of each other.
Observe that each $\bv_i$ is chosen uniformly at random from the set of nodes that form an independent set with $\set{\bv_1,\dots,\bv_{i-1}}$, so this coupling results in Glauber dynamics being distributed as it should be.

We now justify that $\Pr[(\cap_{i=1}^s E_i) \cap (\cap_{i=1}^s D_i)] = 1-O(\log^{2-C}n)$ with probability $1-o(1)$ over the randomness of $\bG$. 
First, we condition on \Cref{degree_concentration}, obtaining that all sets $S\in[n]^{\leq \log n - C\log\log n}$ satisfy $|\updeg(S) - d_{|S|}| \leq d_{|S|}^{2/3}$. 
This occurs with probability $1-o(1)$ over the graph $\bG$.
Because there exist independent sets of size $s$, $\bT_0,\dots,\bT_s<\infty$ almost surely.
We are interested in showing that $\Pr[E_{i+1}^C | \bS_{T_{i}}]$ is close to $0$. 
This is the event that starting from a set of size $i$, we drop a vertex before adding one. 
The number of vertices that could be added is $\updeg(\bS_{\bT_{i}}) \geq d_{i}(1 - d_i^{-1/3})$, where $d_i = (n-i)2^{-i}$ and the number of vertices that we can drop is $i$. 
Therefore, the probability that we drop a vertex before adding one is at most
\begin{align*}
i/(i+d_{i}(1 - d_i^{-1/3})) \leq 2i/d_i \leq 4 \log^{1-C}n
\end{align*}
where the last inequality follows because $i/d_i$ is increasing in $i$. 

To see that $\Pr[D_{i+1}^C | \bS_{\bT_i}]$ is small, observe that we are just interested in the probability that $(10\log n)n/d_i$ i.i.d. coins with heads probability at least $d_i(1-d_i^{-1/3})/n$ are all tails. This is just
\begin{align*}
\Pr[D_{i+1}^C | \bS_{\bT_i}] \leq \left(1 - \frac{d_i(1-d_i^{-1/3})}{n}\right)^{(10\log n) n/d_i} \leq n^{-9}.
\end{align*}

By a union bound, it is clear that $\Pr[(\cup_{i=1}^s E_i^C)\cup(\cup_{i=1}^s D_i^C)] \leq 5\log^{2-C}n$. 
\end{proof}

\subsection{\greedy samples largeish cliques in total variation}

We will show that, with high probability over $G$, \greedy can approximately sample an independent set in total variation distance from $\hc$ restricted to independent sets of size $\log n - C\log \log n$ for large enough $C$. 

We explain some high-level reason that \greedy samples up to size $\log n - C\log\log n$. 
Fix any independent set $S$ of size $s$; without loss of generality suppose $S = \{1,\ldots,s\}$.
Using the notation of \Cref{gradient_ascent}, \greedy outputs $S$ if and only if the sequence $v_1,\ldots,v_s$ is some permutation of the elements of $S$.
The probability of seeing any given permutation $\pi$ of the elements of $S$ is proportional to $\prod_{i=1}^{s-1} |X_i|^{-1}$, where $X_i$ is the set of all vertices that is not adjacent to any vertex in $\set{\pi(1),\dots,\pi(i)}$. 
Note that $|X_i| = \updeg(S)$, and from \Cref{degree_concentration} we have strong concentration for $\updeg(S)$ for any set $S$ not too large.
As long as $k$ is not too large, we have that $|X_1|,\ldots,|X_{k-1}|$ all concentrate simultaneously, which is enough to prove our theorem.

\begin{theorem}
\label{sampling_small}
Let $\bG\sim G(n,1/2),C>6,k=\log n - C\log \log n$, and let $\alg:=\alg(\bG)$ denote the probability measure associated with running the algorithm \Cref{gradient_ascent} on $\bG$ for time $k$.
Then with probability $1-o(1)$ over the randomness of $\bG$, $\dtv{\alg}{\hck} = o(1)$.
\end{theorem}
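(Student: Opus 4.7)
The plan is to compute the probability that \greedy outputs each fixed $k$-independent set $S$ by summing over orderings, invoke \Cref{degree_concentration} to argue these probabilities are nearly equal, and use \Cref{small_clique_concentration} to control the failure probability.

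Condition on two events, each of probability $1-o(1)$ over $\bG$: first, the conclusion of \Cref{degree_concentration}, namely $\updeg(T) = d_{|T|}(1\pm d_{|T|}^{-1/3})$ for every $T\subseteq V(\bG)$ of size at most $k$; and second, $\bZ_k = (1\pm o(1))\EZ_k$, which follows from \Cref{small_clique_concentration} since $k \le 2\log n - 5\log\log n$.

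For any $k$-independent set $S$ in $\bG$, decompose the event $\{\greedy(\bG,k)=S\}$ over the $k!$ orderings $\pi$ in which the elements of $S$ may be added. Each ordering contributes $\prod_{i=0}^{k-1}\updeg(\{\pi(1),\dots,\pi(i)\})^{-1}$, and by the conditioning each factor equals $d_i^{-1}(1+O(d_i^{-1/3}))$. Since $d_i = (n-i)2^{-i}$, we have $d_i^{-1/3} = O(2^{i/3}/n^{1/3})$, so the errors grow geometrically in $i$ and their sum is dominated by the final term $d_{k-1}^{-1/3} = O(\log^{-C/3}n) = o(1)$. Thus $\prod_{i=0}^{k-1}(1+O(d_i^{-1/3})) = 1+o(1)$, so writing $p_0 \defeq k!\prod_{i=0}^{k-1}d_i^{-1}$, every $k$-independent set in $\bG$ is output by \greedy with the same probability $(1\pm o(1))p_0$.

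Direct computation gives $\EZ_k\cdot p_0 = \binom{n}{k}2^{-\binom{k}{2}} \cdot k!\,2^{\binom{k}{2}}/\prod_{i=0}^{k-1}(n-i) = 1$, so under the conditioning $\Pr[\greedy(\bG,k)\text{ succeeds}] = \bZ_k \cdot (1\pm o(1))p_0 = 1 \pm o(1)$. This yields both that \greedy fails with probability $o(1)$ and that, conditional on success, it is $o(1)$-close in TV to the uniform distribution $\hck$; combining gives the theorem. The main obstacle is controlling the compounded multiplicative errors across $\Theta(\log n)$ steps, which works only because $d_i^{-1/3}$ grows geometrically in $i$, so the sum is controlled by its top term — this is $o(1)$ provided $C$ is sufficiently large, which the hypothesis $C>6$ comfortably accommodates.
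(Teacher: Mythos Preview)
Your proposal is correct and follows essentially the same approach as the paper: condition on up-degree concentration, decompose over orderings, and bound the compounded multiplicative error by a geometric sum dominated by its last term $O(\log^{-C/3}n)$. One small difference: the paper observes that up-degree concentration already forces $\updeg(S)>0$ for every $|S|\le k-1$, so \greedy deterministically succeeds and $\sum_S p_S=1$ directly yields $\bZ_k p_0 = 1\pm o(1)$; your extra conditioning on $\bZ_k=(1\pm o(1))\EZ_k$ via \Cref{small_clique_concentration} is therefore unnecessary, though harmless.
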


\begin{proof}
Let $C>6$.
\Cref{degree_concentration} tells us that with probability $1-o(1)$, for any $S\in [n]^{\leq \log n-C\log \log n}$, $|\updeg(S) - d_{|S|}| \leq d_{|S|}^{2/3}$. In the remainder of the proof, we condition on this event.
By the concentration of the up-degree, no matter what path \greedy takes, we will output some independent set of size $k$.
We will analyze the probability of outputting any particular $k$-independent set.

Let $\bS$ be an arbitrary $k$-independent set in $\bG$.
To output $\bS$, the $v_1,\dots, v_k$ in the description of the algorithm in \Cref{alg:gradient_ascent} must be a permutation of the vertices in $\bS$.
Fix a permutation and call the vertices $v_1,\dots,v_k$.
We will now understand the probability of the algorithm inducing such a permutation.
Since we have conditioned on concentration of the up-degrees,
\begin{align*}
p\defeq \Pr[\text{\greedy produces }v_1,\dots,v_k] \in \prod_{j=0}^{k-1} \frac{1}{d_j(1\pm d_j^{-1/3})}.
\end{align*}
We will now show that $p$ is close to $\prod_{j=0}^{k-1}d_j^{-1}$, so that the probability of reaching any $k$-set through any given permutation is similar. 
Let $D = \prod_{j=0}^{k-1} d_j$. Observe that the possible multiplicative error of $p$ from $1/D$ is bounded as
\begin{align*}
\prod_{j=0}^{k-1} \frac{1}{1 + d_j^{-1/3}} \leq \prod_{j=0}^{k-1} \frac{1}{1\pm d_j^{-1/3}} \leq \prod_{j=0}^{k-1} \frac{1}{1 - d_j^{-1/3}}.
\end{align*}
There is a negative neighborhood of 0 such that $1/(1+x)\leq e^{-2x}$. 
Then
\begin{align*}
\prod_{j=0}^{k-1} \frac{1}{1 - d_j^{-1/3}} &\leq \exp\left(2 \sum_{j=0}^{k-1}  d_j^{-1/3}\right) = \exp\left(2 \sum_{j=0}^{k-1}  \frac{2^{j/3}}{(n-j)^{1/3}}\right) \\
&\leq \exp\left(\frac{3}{n^{1/3}} \sum_{j=0}^{k-1}  2^{j/3}\right) \leq \exp\left(\frac{3}{n^{1/3}} O(2^{k/3})\right) = \exp\left(O\left(\frac{1}{\log^{C/3}n}\right)\right) \\
&= 1+O\left(\frac{1}{\log^{C/3}n}\right).
\end{align*}
By a very similar argument, we see that 
\begin{align*}
    \prod_{j=0}^{k-1} \frac{1}{1 + d_j^{-1/3}} \geq 1-O\left(\frac{1}{\log^{C/3}n}\right).
\end{align*}
So $p \in (1\pm o(1))D^{-1}$.
By summing over all possible permutations of $S$, $\bp_S\defeq \Pr[S] \in (1\pm o(1))k!/D$.
Because the algorithm terminates successfully, 
\begin{align*}
1  = \sum_{S\text{ i.s. in }\bG} \bp_S = \bZ_k \cdot (1\pm o(1))\frac{k!}{D},
\end{align*}
Thus $1/\bZ_k = (1\pm o(1)) k!/D = (1\pm o(1)) \bp_S$ for all independent set $S$, implying 
$\dtv{\alg}{\hck} = o(1)$ as desired.
\end{proof}

\subsection{Sampling in transport distance}
We have seen that Glauber Dynamics, run for an appropriate amount of time, will sample an approximately uniform independent set of size up to $\km = \log n - C\log \log n$ for $C>6$.
This is not quite sufficient for sampling from $\hc$ in TV distance, as $\hc$ is concentrated on independent sets of size $\ks \defeq \log n - 1 \cdot \log\log n$ (\Cref{lem:measure_approximation}).
Here we will show that sampling from $\mu_{\bG\mid \km}$ in TV distance is produces an approximate sample from $\hc$ in $\w_2$.

To show this, we will produce an $\ell_2$-coupling between the uniform distribution over independent sets of size $\km$ and $\hc$.
At a high level, we will show that each $\km$ independent set is contained in approximately the same number of $k$-independent sets, for every $k\in \ks\pm\log\log n$.

We will first generalize our definition of up-degree to allow extension of an independent set by multiple vertices:

\begin{definition}
Given a graph $G=(V,E)$ and $S\subseteq V$, the $\ell$-up-degree of $S$ is defined as 
\begin{align*}
    \mathrm{deg}^{\uparrow}_\ell(S) = \left|\left\{T \in  \binom{V\backslash S}{\ell}; (u,v)\notin E \text{ for all } u\in S, v\in T, T\text{ is an i.s.}\right\}\right|.
\end{align*}
For any $\ell$-set $T$ that satisfies the above condition, we will say that $T$ completes $S$.
\end{definition}
We will show that for any fixed constant $B$, when $\ell = B\log \log n$, the $\ell$-up-degree of any set $S$ of size $\km$ concentrates well when $\bG \sim G(n,1/2)$ for $n$ sufficiently large.
We will do this by bounding a poly-logarithmic moment of $\updeg_\ell(S) - \E \updeg_\ell(S)$. 

\begin{lemma}
\label{updeg_concentration}
Let $2\leq B\leq C, C>20$, $\ell=B\log\log n$, $\km=\log n-C\log\log n$. Then, for any $0\leq \epsilon \leq 1$ and any set $S\in \binom{[n]}{\km}$, 
\begin{align*}
    \Pr[|\updeg_\ell(S) - \E \updeg_\ell(S)| \geq \epsilon \E \updeg_\ell(S)] \leq \left(\frac{1}{\eps \log n}\right)^{\Omega(\log^5 n /\log\log n)}
\end{align*}
where $\Omega(\cdot)$ hides a positive constant depending on $B,C$.
\end{lemma}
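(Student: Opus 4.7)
The plan is to prove concentration via the high-moment method: for a suitably chosen even integer $q = c \log^5 n /\log\log n$ (with $c = c(B,C) > 0$ sufficiently small), I would bound $\E[(\updeg_\ell(S)-\mu)^q]$ with $\mu = \E\updeg_\ell(S)$ and then apply Markov's inequality. Write $\updeg_\ell(S) = \sum_T X_T$, where $T$ ranges over $\ell$-subsets of $[n]\setminus S$ and $X_T = \Ind[T \text{ completes } S]$. Each $X_T$ is a monomial in the edge-variables of $\bG$ touching either the $\km\ell$ slots between $S$ and $T$ or the $\binom{\ell}{2}$ slots inside $T$; hence $X_{T_i}$ and $X_{T_j}$ are independent whenever $T_i \cap T_j = \emptyset$. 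Setting $\widetilde X_T = X_T - \E X_T$, expand
\begin{align*}
\E[(\updeg_\ell(S)-\mu)^q] = \sum_{T_1,\dots,T_q} \E\Big[\prod_{i=1}^q \widetilde X_{T_i}\Big].
\end{align*}
A tuple contributes zero unless every $T_i$ shares at least one vertex with another $T_j$ (a vertex-isolated $T_i$ gives an independent centered factor with mean zero), so I would organize the surviving tuples by the partition of $[q]$ into vertex-connected components under $i \sim j \iff T_i \cap T_j \neq \emptyset$, noting each component has size $\geq 2$ and the expectation factors across components.

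For a component $C$ with sets $\{T_{i_j}\}_{j \le r}$, writing $U = \bigcup_j T_{i_j}$ and $F = \bigcup_j \binom{T_{i_j}}{2}$, I would bound $|\E[\prod_j \widetilde X_{T_{i_j}}]|$ by $2^r \cdot 2^{-\km|U|-|F|}$ via the triangle inequality applied to the expansion of the centered product. To count tuples with a given overlap structure, I would encode each component by a spanning tree whose edges record which pairs of the $T_{i_j}$ overlap and by how many vertices. The key book-keeping observation is that every shared vertex between two sets costs a factor of $\approx n/\ell^2$ in the count and buys a factor of $2^{\km}$ in the joint probability, for a net per-overlap ratio of
\begin{align*}
\frac{\ell^2 \cdot 2^{\km}}{n-\km-\ell} = O\!\left(\frac{(\log\log n)^2}{(\log n)^{C-1}}\right),
\end{align*}
so any overlap beyond what is needed to connect a component only hurts. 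The extra factor $2^{\binom{r}{2}}$ coming from $r$-way vertex coincidences within a component is controlled thanks to $B \leq C$ and $r \leq \ell = B\log\log n$.

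Combining the per-component bounds with the factorial counting of partitions of $[q]$ into blocks of size $\geq 2$, I expect to reach
\begin{align*}
\E[(\updeg_\ell(S)-\mu)^q] \leq q^{O(q)} \cdot \mu^q \cdot \left(\frac{(\log\log n)^{O(1)}}{(\log n)^{C-1}}\right)^{q/2},
\end{align*}
after which Markov yields $\Pr[|\updeg_\ell(S) - \mu| \geq \eps \mu] \leq \eps^{-q}\, (q^{O(1)}/(\log n)^{C-1-o(1)})^{q/2}$. Choosing $q = c\log^5 n/\log\log n$ produces the desired tail $(\eps\log n)^{-\Omega(\log^5 n/\log\log n)}$, with the hidden constant depending on $B$ and $C$.

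The hard part will be the structure-counting: I must sharply bound the number of $q$-tuples of $\ell$-subsets producing a given overlap pattern so that the dominant term really comes from matchings of pairs sharing a single vertex, and verify that the combinatorial overhead $q^{O(q)}$ from summing over partition structures is genuinely absorbed by the exponential savings $(\log n)^{-(C-1)q/2}$ at our chosen moment order $q$. The saving grace is that we are given $C > 20$, leaving a comfortable polynomial-in-$\log n$ margin per moment to pay for the combinatorial factors; the requirement $B \leq C$ ensures the internal $2^{\binom{r}{2}}$ factors within a component never overwhelm the $2^{-\km|U|}$ discount.
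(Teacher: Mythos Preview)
Your proposal is correct and takes essentially the same high-moment approach as the paper, including the choice of moment order $q \asymp \log^5 n/\log\log n$ and the use of the condition $C>20$ to absorb the combinatorial overhead. The paper's bookkeeping differs only cosmetically: it indexes the surviving tuples by the single overlap parameter $a = q\ell - |T_1\cup\cdots\cup T_q|$ (rather than by connected components and spanning trees) and bounds the centered product via FKG rather than the elementary triangle-inequality argument you sketch---both routes yield the same bound $|\E\prod_i \widetilde X_{T_i}|\le 2^q\cdot 2^{-\km|U|-|F|}$.
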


\begin{proof}
Let $m = n-\km$, the number of vertices remaining in $\bG$ after a $\km$-set is removed.
Fix some $\km$-set $S$; without loss of generality, let $S=\set{m+1,\dots,n}$.
Let $\bX = \updeg_\ell(S)$, and let $\mu = \E \bX$.
For any even $d$, by Markov's inequality,
\begin{align*}
\Pr[|\bX-\mu| > \epsilon \mu] \leq \frac{\E [(\bX-\mu)^d]}{\epsilon^d \mu^d}
\end{align*}
We first compute a lower bound for $\mu$:
\begin{align*}
\mu 
= \binom{m}{\ell} 2^{-\ell \km - \binom{\ell}{2}} 
= \frac{m^\ell}{\ell!}2^{-\ell \km - \binom{\ell}{2}} \prod_{j=0}^{\ell-1} \left(1-\frac{j}{m}\right) 
\geq \frac{m^\ell}{\ell!}2^{-\ell \km - \binom{\ell}{2}} e^{-\sum_{j=0}^{\ell-1} \frac{j/m}{1-j/m}} 
\geq \frac{m^\ell}{\ell!}2^{-\ell \km - \binom{\ell}{2}} e^{-\frac{2}{m} \binom{\ell}{2}} 
\end{align*}
because $1-x\geq e^{-x/(1-x)}$ for $x\in[0,1]$ and $\ell=o(m)$.

For any $\ell$-set $A$, let $\Ind_A := \Ind[A \text{ completes }S]$.
Recall that $\bX = \sum_{A\in \binom{[m]}{\ell}} \Ind_A$.
Let $p = \E \Ind_A$.
Then
\begin{align*}
\E [(X-\mu)^d] 
= \E\left[\sum_{A_1,\dots,A_d \in \binom{[m]}{\ell}} \prod_{i=1}^d (\Ind_{A_i} - p) \right] 
= \sum_{A_1,\dots,A_d \in \binom{[m]}{\ell}}\E\left[ \prod_{i=1}^d (\Ind_{A_i} - p) \right],
\end{align*}
Observe that $\E[\Ind_{A}-p] = 0$.
Therefore, for any choice of $A_1,\dots,A_d$, if any one of the $A_i$s is disjoint from the rest, by independence, $\E[\prod (\Ind_{A_i}-p)]=0$.
By a simple counting argument (\Cref{vertex_overlap}), if $|A_1\cup\cdots\cup A_d| > d\ell - d/2$, then such a disjoint set exists.
Thus we only need to consider $A_1,\dots, A_d$ such that $|A_1\cup\cdots\cup A_d|\leq d\ell -d/2$. 

Now, fixing any $A_1,\dots, A_d$, we claim that
\begin{align*}
\E\left[ \prod_{i=1}^d (\Ind_{A_i} - p) \right] 
= 2^d \Pr[A_1,\dots, A_d\text{ complete } S].
\end{align*}
This is because there are $2^d$ terms in the expansion of the LHS, half of which are negative.
We can bound each positive product of probabilities above by $\Pr[\cap_{i=1}^d A_i]$ by \Cref{FKG} because each $A_i$ is an increasing event. 

It then follows that
\begin{align*}
\E [(\bX-\mu)^d]  \leq 2^d \sum_{a=d/2}^{d\ell-\ell} N_a \left(\frac{1}{2} \right)^{(d\ell-a)\km + d\binom{\ell}{2}-\min\set{\frac{a}{\ell}\cdot\binom{\ell}{2},\binom{a}{2}}},
\end{align*}
where $N_a$ is the number of possible ways to choose $A_1,\dots,A_d\in\binom{[m]}{\ell}$ such that $|A_1\cup\cdots\cup A_d| = d\ell-a$.
This is because each of the $d\ell-a$ vertices in the union need to be non-neighbors of every vertex in $S$, and every "internal" edge within $A_1,\dots, A_d$ needs to not exist.
In \Cref{edge_bound}, we argue that there are at least $d\binom{\ell}{2} - \min\set{\frac{a}{\ell}\cdot\binom{\ell}{2} , \binom{a}{2}}$ of these internal edges.
In \Cref{lem:config_bound}, we prove that $N_a \le \binom{m}{d\ell-a} \binom{d\ell-1}{a-1} \frac{(d\ell)!}{(\ell!)^d}$.

We now combine the upper and lower bounds with Markov's inequality: 
\begin{align}
\frac{\E [(X-\mu)^d] }{\mu^d} 
&\leq \frac{2^d \sum_{a=d/2}^{d\ell-\ell} \binom{m}{d\ell-a} \binom{d\ell-1}{a-1}\frac{(d\ell)!}{(\ell!)^d} \left(\frac{1}{2} \right)^{(d\ell-a)\km + d\binom{\ell}{2}-\min\set{\frac{a}{\ell}\binom{\ell}{2},\binom{a}{2}}}}{\frac{m^{d\ell}}{\ell!^d}\left(\frac{1}{2}\right)^{d\ell \km + d\binom{\ell}{2}} e^{-\frac{2d}{m} \binom{\ell}{2}} } \nonumber \\
&= 2^d e^{\frac{2d}{m} \binom{\ell}{2}} \sum_{a=d/2}^{d\ell-\ell}\frac{1}{m^{d\ell}} \binom{m}{d\ell-a} \binom{d\ell-1}{a-1}(d\ell)! 2^{a\km +\min\set{\frac{a}{\ell}\binom{\ell}{2},\binom{a}{2}}}\nonumber\\
&\leq 2^d e^{\frac{2d}{m} \binom{\ell}{2}}\sum_{a=d/2}^{d\ell-\ell} \frac{1}{m^{d\ell}}\frac{m^{d\ell-a}}{(d\ell-a)!} \frac{(d\ell-1)^{a-1}}{(a-1)!} \frac{(d\ell)^{d\ell+1}}{e^{d\ell-1}} 2^{a\km +\min\set{\frac{a}{\ell}\binom{\ell}{2},\binom{a}{2}}},\label{eq:pause}
\end{align}
where in the final line we have applied Stirling's inequality and the upper bound $x!/(x-y)! \le x^y$ as well as \Cref{fact:factorial_bound}.
We can bound the ratio:
\begin{align*}
\frac{(d\ell)^{d\ell+1}}{(d\ell-a)!} &\leq \frac{(d\ell)^{d\ell+1}e^{d\ell-a}}{(d\ell-a)^{d\ell-a}} = e^{d\ell-a} (d\ell)^{a+1} \left(\frac{d\ell}{d\ell-a} \right)^{d\ell-a} \\
&= e^{d\ell-a} (d\ell)^{a+1} \left(1+\frac{a}{d\ell-a} \right)^{d\ell-a} \leq e^{d\ell-a} (d\ell)^{a+1} e^a = e^{d\ell}(d\ell)^{a+1}.
\end{align*}
Continuing from \Cref{eq:pause} above and applying Stirling's inequality and dropping the $(a-1)!$ term,
\begin{align*}
\frac{\E [(\bX-\mu)^d] }{\mu^d}
&\leq 2^d e^{\frac{2d}{m} \binom{\ell}{2}}\sum_{a=d/2}^{d\ell-\ell} m^{-a} (d\ell-1)^{a-1} e^{-d\ell+1} e^{d\ell}(d\ell)^{a+1} 2^{a\km +\min\set{\frac{a}{\ell}\binom{\ell}{2},\binom{a}{2}}} \\
&\leq 2^d e^{\frac{2d}{m} \binom{\ell}{2}+1}\sum_{a=d/2}^{d\ell-\ell} m^{-a} (d\ell)^{2a} 2^{a\km +\min\set{\frac{a}{\ell}\binom{\ell}{2},\binom{a}{2}}}\\
&= 2^d e^{\frac{2d}{m} \binom{\ell}{2}+1}\sum_{a=d/2}^{d\ell-\ell}  \left(\frac{(d\ell)^2 2^{\km + \min\set{\frac{1}{\ell}\binom{\ell}{2},\frac{a-1}{2}}}}{m}\right)^a
\intertext{
Now using that $\km = \log n - C \log\log n$, $\ell = B \log\log n$, and letting $d=\log^Dn/\ell$,
}
&\leq e 2^d (1+o(1))\sum_{a=d/2}^{d\ell-\ell} \left(\frac{n}{m}\cdot 2^{\frac{\ell-1}{2}}\log^{2D-C}n\right)^a\\
&\leq e 2^d (1+o(1))\sum_{a=d/2}^{d\ell-\ell} \left(\left(1 + O\left(\tfrac{\log n}{n}\right)\right)\cdot \log^{2D-C + B/2}n \right)^a\\
\intertext{So long as $2D + B/2 - C < 0$ and is bounded away from $0$, the summation is geometrically decreasing, dominated by the term $a = d/2$.
Thus,}
&\le (1+o(1)) e 2^d \left(e \log^{2D - C + B/2} n\right)^{d/2}\\
&\le (1+o(1)) e \left(4 e \log^{2D-C +B/2} n \right)^{d/2}\\
&\leq \exp\left(-F\log^D n  + O(\log^D n /\log\log n)\right).
\end{align*}
where $F > 0$ is some constant that depends on $B,C,D$.
The result follows by letting $D=5$, after which the conditions $B \le C$ and $C > 20$ ensure that $2D+B/2-C$ is strictly negative.
\end{proof}

We are now ready to prove the following theorem, which completes the proof of \Cref{thm:glauber-sample} (since the output distribution of $\greedy, \alg$, is close in total variation to $\hc$ by \Cref{coupling}, and since $\w_2$ is bounded by $O(1)$).
\begin{theorem}
Recall the definition of $\alg$ in \Cref{sampling_small}. With probability $1-o(1)$,
\begin{align*}
\w_2(\alg, \hc) = o(1)
\end{align*}
\end{theorem}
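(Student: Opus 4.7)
The plan is to construct an explicit coupling achieving $\w_2(\alg, \hc)^2 = o(1)$ by routing through uniform extensions of $\km$-independent sets to sets of size near $\ks$. First I would use \Cref{lem:measure_approximation} with a slowly growing window $a = \omega(1)$ (say $a = \log\log n$) to replace $\hc$ by its restriction $\hca$ to sizes in $[\ks-a, \ks+a]$; since every independent set in $\bG \sim G(n,1/2)$ has size $O(\log n)$ by \Cref{largest_clique} (so the normalized $\ell_2$ diameter of the support is $O(1)$), a TV error of $O(2^{-a}) = o(1)$ translates into an $o(1)$ perturbation in $\w_2$. Writing $\hca = \sum_{k^+ = \ks-a}^{\ks+a} p_{k^+}\,\mu_{k^+}$ where $\mu_{k^+}$ is the uniform distribution on $k^+$-independent sets of $\bG$, by the convexity bound \Cref{W2_convexity} it suffices to build, for each $k^+$ in this window, a coupling between $\alg$ and $\mu_{k^+}$ whose expected squared integrand is $o(1)$.

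For fixed $k^+$ let $\ell = k^+ - \km = O(\log\log n)$. The coupling draws $\bS \sim \alg$ and then sets $\bT = \bS \cup \bA$, where $\bA$ is a uniformly random $\ell$-set of vertices completing $\bS$ in $\bG$ (on the rare event that no such $\bA$ exists, pair $\bS$ with an arbitrary $k^+$-independent set). To show that $\bT$ is $o(1)$-TV-close to $\mu_{k^+}$, I combine \Cref{sampling_small} (so $\bS$ is $o(1)$-TV-close to uniform on $\km$-sets) with \Cref{updeg_concentration}: choosing $\epsilon = 1/\log n$, the per-set tail $\exp(-\Omega(\log^5 n))$ comfortably absorbs a union bound over all $\binom{n}{\km} \le 2^{O(\log^2 n)}$ candidate $\km$-sets, so with probability $1-o(1)$ over $\bG$ every $\km$-set $S$ satisfies $\updeg_\ell(S) = (1 \pm o(1))\,M_\ell$ for a common mean $M_\ell$. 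Under this event the output probability of any specific $k^+$-independent set $T$ is approximately $\binom{k^+}{\km}/(\bZ_{\km}\, M_\ell)$, which is uniform in $T$, and hence the marginal of $\bT$ is $o(1)$-TV-close to $\mu_{k^+}$.

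It remains to control the integrand. Since $\bS \subset \bT$ we have $|\bS \cap \bT| = \km$, while by \Cref{hp_concentration_of_size} (with $\beta \to 0$ slowly) $m_2(\hc)^2 = (1 \pm o(1))\log n$, and by construction $m_2(\alg)^2 = \km = (1 - o(1))\log n$. So on the coupling-success event,
\[
\left\|\frac{\Ind_\bS}{m_2(\alg)} - \frac{\Ind_\bT}{m_2(\hc)}\right\|^2 = \frac{\km}{m_2(\alg)^2} + \frac{k^+}{m_2(\hc)^2} - \frac{2\km}{m_2(\alg)\,m_2(\hc)} = o(1),
\]
while on the (low-probability) failure event the integrand is bounded by the $O(1)$ normalized diameter, contributing only $o(1)$ after averaging. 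Collecting the $o(1)$ TV slacks and averaging via \Cref{W2_convexity} yields $\w_2(\alg, \hc)^2 = o(1)$. The main technical obstacle is ensuring the tail from \Cref{updeg_concentration} survives the union bound over all $\km$-sets; fortunately $\exp(-\Omega(\log^5 n))$ dominates $2^{O(\log^2 n)}$ by a wide margin. A secondary point is the conversion of TV closeness into $\w_2$ closeness, which is routine once the diameter of the normalized support is confirmed to be $O(1)$ via \Cref{largest_clique}.
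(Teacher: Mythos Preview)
Your proposal is correct and follows essentially the same approach as the paper: reduce to the window $[\ks-a,\ks+a]$ via \Cref{lem:measure_approximation}, use \Cref{W2_convexity} to handle each size $k^+$ separately, couple $\alg$ to $\mu_{k^+}$ by uniformly extending a $\km$-set to a $k^+$-superset, and invoke \Cref{updeg_concentration} (union-bounded over all $\km$-sets) to show the extension is nearly uniform. The only cosmetic difference is that the paper packages the ``extend then correct the marginal'' step via a bipartite containment graph $\bH$ and the time-reversal $P'$ of its random-walk operator, which cleanly produces an exact coupling of $\alg$ and $\hck$; your version leaves the TV-slack correction implicit, but this is routine given the $O(1)$ normalized diameter you already noted.
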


\begin{proof}
We condition on the event that every independent set in $\bG$ is of size $O(\log n)$ from \Cref{largest_clique}, the event that $\dtv{\hc}{\hca} = o(1)$ for $a=\log\log n$ from \Cref{lem:measure_approximation}, and the event that \greedy run for a specific number of steps succeeds at sampling independent sets of size $\km = \log n - C \log\log n$ from \Cref{sampling_small}, where we choose $C$ to be an arbitrary constant satisfying $C> 20$.

We will construct a coupling between $\alg$ and $\hca$ where we let $a=\log\log n$; this will be enough because $\dtv{\hc}{\hca} = O(2^{-a})$.
Observe that by \Cref{W2_convexity}, it is enough to construct a good coupling between $\alg$ and $\hck$ for each $k$ satisfying $|k - \ks| \leq a$.
From now on, fix one such $k$.
To couple $\alg$ and $\hck$, we will sample a $\km$-independent set from $\alg$, and then choose one of the $k$-independent sets which extend it uniformly at random.
Since $k-\km = o(k)$, the expected distance between independent sets in this coupling will be small.
It remains to verify that this induces $\hck$ (or something close to it) on $k$-independent sets.

In order to understand the relationship between the $\km$ and $k$ size independent sets, consider the bipartite graph $\bH = (L,R,E)$ where $L$ consists of the independent sets of size $\km$, $R$ consists of the independent sets of size $k$, and $(S,T)\in E$ iff the $\km$ independent set $S$ is contained in the $k$-independent set $T$.
Let $P$ be the transition operator of the simple random walk on $\bH$.
It is well known that one stationary measure $\mu^{\bH}$ of $P$ assigns to $v\in L\cup R$ probability proportional to its degree in $\bH$. 
Note that the degree of a $\km$-independent set $S$ in $\bH$ is precisely $\updeg_\ell(S)$.
Let $\mu_L^{\bH},\mu_R^{\bH}$ be the measure conditioned on being in $L$ or $R$, respectively.

We apply \Cref{updeg_concentration} and a union bound to argue that the degree of every $\km$-independent set in $L$ will be close to its expectation.
Indeed, observe that if we let $\km = \log n - C \log\log n$, then $\ell \defeq k-\km \in [(C-2)\log \log n, C\log \log n]$, and thus $\ell$ satisfies the conditions of \Cref{updeg_concentration}.
We can then conclude that for any set $S\subseteq[n]$ such that $|S|=\km$ and $\eps \in (0,1)$, 
\begin{align*}
    \Pr[|\updeg_\ell(S) - \E \updeg_\ell(S)| \geq \epsilon \E \updeg_\ell(S)] \leq \left(\frac{1}{\eps \log n}\right)^{\Omega(\log^5 n /\log\log n)}.
\end{align*}
Choosing $\eps = \frac{1}{\log\log n}$ and taking a union bound over all $\km$-sets in $[n]$, 
\begin{align*}
    \Pr\left[\exists S \in \binom{[n]}{\km}\, s.t. \, |\updeg_\ell(S) - \E \updeg_\ell(S)| \geq  \frac{\E \updeg_\ell(S)}{\log\log n} \right] 
&\leq \binom{n}{\km} \left(\frac{\log\log n}{\log n}\right)^{\Omega(\log^5 n /\log\log n)}\\
&\leq n^{\log n} \left(\frac{\log\log n}{\log n}\right)^{\Omega(\log^5 n /\log\log n)}\\
&\le \exp(- \Omega(\log^5 n)).
\end{align*}

Conditioning on the event that the up-degrees of all $\km$-sets concentrate, we have that the degree of every $\km$-independent set $S \in L$ is within a $(1\pm \eps)$ factor of its expectation, with $\eps = 1/\log\log n$.

We now bound
\begin{align*}
\dtv{\mu_L^{\bH}}{\mu_{\text{hc},\km}} = \sum_{S\text{ i.s. in } G, |S|=\km} \left| \frac{1}{\bZ_{\km}} - \mu_L^{\bH}(S) \right|.
\end{align*}
Letting $D$ be $D=\E \updeg_\ell(S)$ (by symmetry the same for all $S$),
\begin{align*}
\mu_L^{\bH}(S) = \frac{\updeg_\ell(S)}{\sum_{S'\in L}\updeg_\ell(S')}\in \frac{(1\pm\epsilon)D}{\bZ_{\km}(1\pm \epsilon)D} \subseteq (1\pm O(\epsilon))\frac{1}{\bZ_{\km}}
\end{align*}
so 
\begin{align*}
\dtv{\mu_L}{\mu_{\text{hc},\km}} = O(\epsilon) = o(1).
\end{align*}
Recall that $\mu_L^{\bH} P = \mu_R^{\bH} = \hck$, where the last equality is because the $\bH$-degree of every $T \in R$ is the same, $\binom{k}{\km}$.
Putting this together,
\begin{align*}
\dtv{\alg P}{\hck} = \dtv{\alg P}{\mu_L^{\bH} P} \leq \dtv{\alg}{\mu_L^{\bH}} \leq \dtv{\alg}{\mu_{\text{hc},\km}} + \dtv{\mu_{\text{hc},\km}}{\mu_L^{\bH}} \leq o(1)
\end{align*}
where the first inequality follows from the data processing inequality, and we invoke the success of \greedy (\Cref{sampling_small}) in the last inequality.

By \Cref{coupling}, there exists a coupling $(\bX',\bY)$ of $\alg P,\hck$ that agrees with $1-o(1)$ probability.
It remains to find a coupling $\pi$ of $\alg$ and $\hck$ that is close in $\w_2$.
To do this, first sample from the coupling of $\alg P,\hck$ to get $(\bX',\bY)$.
Let $P'$ be the time-reversal of $P$ with respect to $\alg$.
By \Cref{timereversal}, we can apply $P'$ to $\bX'$ to get a $\km$-independent set distributed as $\alg$, which we call $\bX$.
Again by \Cref{timereversal}, since $P'$ only removes $O(\log\log n)$ nodes,
\begin{align*}
\E_{(\bX,\bY)\sim\pi}\left\|\ \frac{\bX}{m_2(\alg)}-\frac{\bY}{m_2(\hca)}\right\|^2  &= 2-2\frac{\E_{\pi}\langle \bX,\bY\rangle}{m_2(\alg) m_2(\hca)} \\
&\leq 2 - 2\frac{(1-o(1))(\log n - O(\log \log n))}{(1-o(1))\log n} = o(1)
\end{align*}
because as long as $\bX'=\bY$ in the coupling of $\alg P$ and $\hck$, after applying $P'$ to $\bX'$ to get $\bX$, $\langle \bX,\bY\rangle$ will be $\log n - O(\log \log n)$. Since the above argument is true for any $k\in\ks \pm a$, by \Cref{W2_convexity}, $\w_2(\alg, \hca)^2 = o(1)$.
We will now show $\w_2(\hca,\hc)^2=o(1)$, and we will be done by the triangle inequality.
First, observe that 
\begin{align*}
m_2(\hc)^2 \in \log n \pm O(\log \log n) + O\left(2^{-a}\log n\right) = (1\pm o(1))\log n.
\end{align*}
because $\dtv{\hc}{\hca} = O(2^{-a})$ and the largest independent set in $\bG$ has size $O(\log n)$ with probability $1-o(1)$ by \Cref{largest_clique}. To conclude, using the coupling between $\hc,\hca$ from the $O(2^{-a})$ TV distance, we see that
\begin{align*}
\w_2(\hc,\hca)^2 \leq 2 - 2\frac{(1-O(2^{-a}))(\log n-O(\log \log n))}{(1\pm o(1))\log n} = o(1). \qquad \qedhere
\end{align*}
\end{proof}

\subsection{Auxiliary Lemmas}

In this section, we prove some combinatorial lemmas used in our concentration arguments above.

\begin{lemma}
\label{vertex_overlap}
For any set of $d$ size-$\ell$ subsets $A_1,\ldots, A_d \in \binom{[m]}{\ell}$ with a large enough union $|A_1\cup \cdots \cup A_d| > d\ell -d/2$, there exists $j\in [d]$ such that $A_j$ is disjoint from the rest of the $A_i$s.
\end{lemma}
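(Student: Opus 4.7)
The plan is to prove the contrapositive: assuming no $A_j$ is disjoint from the rest, I will show $|A_1\cup\cdots\cup A_d|\le d\ell-d/2$.

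For each element $v\in[m]$, let $\mathrm{mult}(v)=|\{j\in[d]:v\in A_j\}|$, and let $S=\{v:\mathrm{mult}(v)\ge 2\}$ denote the set of elements shared between at least two of the $A_j$'s. A double-counting identity gives
\[
T \;\defeq\; d\ell - |A_1\cup\cdots\cup A_d| \;=\; \sum_{j=1}^d |A_j| - \Big|\bigcup_{j=1}^d A_j\Big| \;=\; \sum_{v\in S}(\mathrm{mult}(v)-1).
\]
So it suffices to show $T\ge d/2$.

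The key observation is that the assumption ``no $A_j$ is disjoint from the rest'' is equivalent to saying that every $A_j$ contains at least one element of $S$: if $A_j\cap A_i\ne\emptyset$ for some $i\ne j$, the common element has multiplicity at least $2$, and vice versa. Thus $|A_j\cap S|\ge 1$ for every $j\in[d]$, which yields
\[
\sum_{v\in S}\mathrm{mult}(v) \;=\; \sum_{j=1}^d |A_j\cap S| \;\ge\; d.
\]
On the other hand, every $v\in S$ has $\mathrm{mult}(v)\ge 2$, so $\mathrm{mult}(v)\le 2(\mathrm{mult}(v)-1)$, and therefore
\[
\sum_{v\in S}\mathrm{mult}(v) \;\le\; 2\sum_{v\in S}(\mathrm{mult}(v)-1) \;=\; 2T.
\]
Combining the two displayed inequalities gives $2T\ge d$, i.e.\ $T\ge d/2$, which proves $|A_1\cup\cdots\cup A_d|\le d\ell-d/2$ and completes the contrapositive.

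There is no real obstacle — the argument is a short double-counting identity together with the elementary inequality $k\le 2(k-1)$ for $k\ge 2$. The only thing to be careful about is correctly translating ``$A_j$ is disjoint from the rest'' into ``every element of $A_j$ has multiplicity $1$,'' which is what allows the transition from the disjointness hypothesis to a statement about $S$.
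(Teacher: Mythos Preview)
Your proof is correct. Both you and the paper prove the contrapositive and bound the total overlap $d\ell-|A_1\cup\cdots\cup A_d|$ from below by $d/2$, but the mechanisms differ. The paper builds an auxiliary graph $H$ on vertex set $[d]$, adding for each $i$ one edge $(i,j)$ to the least $j<i$ with $A_i\cap A_j\ne\emptyset$; each such edge certifies at least one duplicated element when $A_i$ is added, so $|A_1\cup\cdots\cup A_d|\le d\ell-|E(H)|$, and the handshake lemma on a graph of minimum degree $1$ gives $|E(H)|\ge d/2$. Your argument bypasses the graph construction entirely, working directly with multiplicities and the inequality $k\le 2(k-1)$ for $k\ge 2$. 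Your route is slightly more self-contained (no auxiliary object, no need to argue that distinct edges certify distinct duplications), while the paper's graph picture makes the ``each $A_j$ touches another'' hypothesis visually transparent as a minimum-degree condition. Either way the lemma is a two-line fact.
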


\begin{proof}
We prove the contrapositive.
Consider a graph $H$ constructed by sequentially adding vertices $1\ldots,d$: when vertex $i$ is added, we add the edge $(i,j)$ for the smallest $j < i$ such that $A_i \cap A_j \neq \emptyset$ (if no such $j$ exists we add no edge).
To each such edge $(i,j)$, we can assign at least one element of $A_i$ which is a duplicate of an element already appearing in $A_1 \cup \cdots \cup A_{i-1}$.
Hence $|A_1 \cup \cdots \cup A_j| \le d - |E(H)|$.
If no $A_i$ is disjoint from all the rest, then the minimum degree in $H$ is at least $1$, and since $2|E(H)| =\sum_{i=1}^d \deg_H(i) \ge d$, we have our conclusion. 
\end{proof}

If $A$ is a set, then we use the notation $\binom{A}{2} = \{\{a,b\} \mid a,b \in A,\, a \neq b\}$.
\begin{lemma}
\label{edge_bound}
Let $A_1,\dots, A_d \in \binom{[m]}{\ell}$, and suppose that $|A_1\cup \cdots \cup A_d| =  d\ell-a$, and that $d\ell <  m$.
Then 
\begin{align*}
\left|\binom{A_1}{2}\cup\cdots\cup\binom{A_d}{2}\right| \geq d\binom{\ell}{2} - \min\left\{\frac{a}{\ell}\binom{\ell}{2}, \binom{a}{2}\right\}.
\end{align*}
\end{lemma}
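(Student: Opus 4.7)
The plan is to do a direct greedy/peeling argument that shows the excess can be attributed entirely to the $a$ repeated vertices.

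First I would order the sets and track overlap vertex-by-vertex. Let $U_i = A_1\cup\cdots\cup A_{i-1}$ (with $U_1=\emptyset$) and set $r_i = |A_i \cap U_i|$, the number of ``already-seen'' vertices in $A_i$. Since every vertex appearing $c$ times in the multiset $A_1,\dots,A_d$ contributes $c-1$ to $\sum_i r_i$, we get $\sum_{i=1}^d r_i = d\ell - |A_1\cup\cdots\cup A_d| = a$, and of course $0 \le r_i \le \ell$.

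Next, I would decompose the union of edges as a disjoint union of new contributions. Let $E_i = \binom{A_i}{2}\setminus \bigcup_{j<i} \binom{A_j}{2}$, so that $\bigl|\bigcup_i \binom{A_i}{2}\bigr| = \sum_i |E_i|$. The key observation is: if an edge $\{u,v\}\subseteq A_i$ fails to be in $E_i$, then $\{u,v\}\subseteq A_j$ for some $j<i$, which forces $u,v\in U_i$. Hence the edges of $\binom{A_i}{2}$ that are \emph{not} new all lie inside $A_i\cap U_i$, giving
\[
|E_i|\;\ge\;\binom{\ell}{2}-\binom{r_i}{2}.
\]
Summing, $\bigl|\bigcup_i \binom{A_i}{2}\bigr|\ge d\binom{\ell}{2}-\sum_{i=1}^d \binom{r_i}{2}$.

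The remaining step is the elementary optimization $\max\bigl\{\sum_i \binom{r_i}{2}: r_i\in[0,\ell],\ \sum_i r_i=a\bigr\}$. Since $\binom{x}{2}$ is convex, the maximum is attained at an extreme point of the feasible polytope, i.e.\ at a configuration where all but at most one $r_i$ lies in $\{0,\ell\}$. When $a\le \ell$, the maximizer concentrates on a single coordinate and the value is $\binom{a}{2}$; when $a>\ell$, saturating $k=\lfloor a/\ell\rfloor$ coordinates at $\ell$ and placing $s=a-k\ell\le\ell$ on one more gives value $k\binom{\ell}{2}+\binom{s}{2}$, which is at most $\frac{a}{\ell}\binom{\ell}{2}$ because $\binom{s}{2}\le \frac{s(\ell-1)}{2}=\frac{s}{\ell}\binom{\ell}{2}$ for $s\le\ell$. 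A quick check shows $\binom{a}{2}\le\frac{a}{\ell}\binom{\ell}{2}$ iff $a\le\ell$, so in every case $\sum_i\binom{r_i}{2}\le \min\bigl\{\binom{a}{2},\tfrac{a}{\ell}\binom{\ell}{2}\bigr\}$. Substituting gives the lemma.

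No step is really an obstacle; the only substantive content is the observation in step 2 that ``duplicate edges can only occur on duplicate vertices.'' The hypothesis $d\ell<m$ plays no role in this argument (it is used upstream in the application of the lemma, to ensure the relevant configurations exist).
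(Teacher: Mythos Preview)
Your proof is correct and follows essentially the same approach as the paper: both process the $A_i$ sequentially, note that any non-new pair in $\binom{A_i}{2}$ must have both endpoints among the already-seen vertices (your $r_i$, the paper's $a_i$), and then bound $\sum_i\binom{r_i}{2}$ by $\min\{\binom{a}{2},\tfrac{a}{\ell}\binom{\ell}{2}\}$ using convexity and the constraint $r_i\le\ell$. Your final optimization is argued a bit more carefully than the paper's one-line justification, but the content is identical.
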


\begin{proof}
Let $S=\emptyset$, and consider adding the sets $A_i$ one at a time to $S$, starting with $A_1$, and ending with $A_d$.
Let $B_i = A_i \setminus (A_1 \cup \cdots \cup A_{i-1})$ be the new vertices added to $S$ in step $i$, and call $|B_i| = \ell-a_i$.
By definition,
\begin{align*}
    |S|= \sum_{i=1}^d |B_i| = \sum_{i=1}^d \ell-a_i = d\ell-\sum_{i=1}^da_i,
\end{align*}
so $a= \sum a_i$.

We will now track the number of pairs added to $\bigcup_{i=1}^d \binom{A_i}{2}$ during this process.
At step $i$, we add all pairs $\{a,b\} \in \binom{A_i}{2}$, excluding those that already appear in $\binom{A_i \setminus B_i}{2}$.
Thus, the number of pairs added in step $i$ is $\binom{\ell}{2} - \binom{a_i}{2}$.
The result now follows because 
\begin{align*}
\sum_{i=1}^d \binom{a_i}{2} \leq \min\left\{\frac{a}{\ell}\binom{\ell}{2}, \binom{a}{2}\right\}
\end{align*}
where the last inequality follows because $a=\sum a_i, \binom{x}{2} + \binom{y}{2} \leq \binom{x+y}{2}$, and $a_i\leq \ell$ for all $i\in[d]$.
\end{proof}

\begin{lemma}
\label{lem:config_bound}
Let $N_a$ be the number of ways to choose $A_1,\dots, A_d \in \binom{[m]}{\ell}$ such that $|A_1\cup \cdots \cup A_d| =  d\ell-a$. Then
\begin{align*}
N_a \leq \binom{m}{d\ell-a} \binom{d\ell-1}{a-1}\frac{(d\ell)!}{(\ell!)^d}.
\end{align*}
\end{lemma}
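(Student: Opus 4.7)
I would prove this via a direct combinatorial counting argument, encoding each valid tuple $(A_1,\ldots,A_d)$ satisfying $|A_1\cup\cdots\cup A_d|=d\ell-a$ using three pieces of data whose individual counts correspond to the three factors on the right-hand side of the stated bound.

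First, I would pick the union $U := A_1\cup\cdots\cup A_d \subseteq [m]$, which by hypothesis has size $d\ell-a$; this step contributes the factor $\binom{m}{d\ell-a}$. Second, I would record the multiplicity profile $(m_u)_{u\in U}$, where $m_u$ denotes the number of sets $A_i$ that contain $u$. Each $m_u\geq 1$ since $u\in U=\bigcup A_i$, and $\sum_{u\in U}m_u=d\ell$ since $\sum_i |A_i|=d\ell$; so the profile is a composition of $d\ell$ into $|U|=d\ell-a$ positive parts. The count of such profiles can be bounded by $\binom{d\ell-1}{a-1}$ after an appropriate encoding (e.g., parameterizing instead by the subset of $U$ with $m_u\geq 2$ together with the composition of the excess $a$ among them), yielding the second factor. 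Third, given $U$ and the multiplicity profile, the concatenated multiset $M := A_1\sqcup\cdots\sqcup A_d$ of size $d\ell$ is determined, and I would distribute the elements of $M$ into the $d$ ordered blocks of size $\ell$ each. Treating the $d\ell$ multiset tokens as distinguishable labels and counting the number of ordered set-partitions into $d$ groups of size $\ell$ gives the multinomial $\frac{(d\ell)!}{(\ell!)^d}$; this is a valid upper bound, since permutations of equal-valued tokens collapse to the same tuple.

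Multiplying the three factors gives the claimed bound. The main conceptual content is the three-stage encoding, which covers every valid tuple via at least one triple in the decoding (read off union, then multiplicities, then ordered partition). The main place where care is needed is the second step: the naive stars-and-bars count of compositions of $d\ell$ into $d\ell-a$ positive parts equals $\binom{d\ell-1}{d\ell-a-1}$, which is not identical to $\binom{d\ell-1}{a-1}$ in general, so matching the stated factor requires a slight refinement of the encoding. The other two steps are direct and tight up to the overcounts acknowledged above.
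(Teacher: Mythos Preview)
Your three-stage encoding (union, multiplicity profile, ordered partition into size-$\ell$ blocks) is exactly what the paper does. The paper's proof also invokes stars-and-bars for the second step and asserts the count is ``at most $\binom{d\ell-1}{a-1}$.''

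You are right to flag the second factor. The number of compositions of $d\ell$ into $d\ell-a$ positive parts is exactly $\binom{d\ell-1}{d\ell-a-1}=\binom{d\ell-1}{a}$, not $\binom{d\ell-1}{a-1}$, and your proposed refinement (record the set $S=\{u:m_u\ge 2\}$ and then a composition of $a$ into $|S|$ positive parts) does not save this: summing $\sum_s \binom{d\ell-a}{s}\binom{a-1}{s-1}$ gives, by Vandermonde, $\binom{d\ell-1}{a}$ again. In fact the stated bound is not literally true as a standalone inequality: for $d=3$, $\ell=2$, $m=10$, $a=1$ one checks $N_1=45360$ while $\binom{10}{5}\binom{5}{0}\frac{6!}{(2!)^3}=22680$. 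So the paper's lemma (and its proof) has the same off-by-one in the exponent of the middle binomial; the correct factor is $\binom{d\ell-1}{a}$. This is harmless for the paper's application in the moment bound, since there the binomial is immediately weakened to $(d\ell-1)^{a-1}/(a-1)!$ and an extra factor of $O(d\ell)$ is absorbed by the slack in the subsequent estimates. Your write-up should simply state and prove the bound with $\binom{d\ell-1}{a}$ and note that this suffices downstream.
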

\begin{proof}
First we choose the elements in $T = A_1\cup \cdots \cup A_d$, for which there are $\binom{m}{d\ell-a}$ choices.
Now we must choose the multiplicity with which each element in $T$ appear in the multiset-union of the $A_i$; each element must have multiplicity at least $1$, and then there are $\sum_{i=1}^d |A_i| - |T| = a$ excess appearances that we must partition among the $d\ell-a$ elements of $T$.
By a stars-and-bars argument, the number of ways to distribute these excess appearances is at most $\binom{d\ell-1}{a-1}$.
Finally, letting $T'$ be the multiset resulting from $T$ when each element is duplicated according to its multiplicity, the number of ways to form an ordered partition of $T'$ into $d$ subsets of size $\ell$ is at most $(d\ell)!/(\ell!)^d$.
The product of these bounds gives our upper bound.
\end{proof}

\ifnum\anon=0
\section*{Acknowledgments}

We wish to thank Nima Anari and Sidhanth Mohanty for useful pointers to the sampling literature.

\else
\fi

\bibliographystyle{alpha}
\newcommand{\etalchar}[1]{$^{#1}$}

\end{document}